\def\dOi{10(2:1)2014}
\keywords{higher-dimensional rewriting system, polygraph,
  presentation, string diagram, compact 2-category, critical pair,
  unification}
\newcommand{\strid}[1]{\scpdfinput{0.9}{#1.pdf}}
\renewcommand{\leq}{\leqslant}
\newenvironment{definition}{\begin{defi}}{\end{defi}}
\newenvironment{lemma}{\begin{lem}}{\end{lem}}
\newenvironment{example}{\begin{exa}}{\end{exa}}
\newenvironment{remark}{\begin{rem}}{\end{rem}}
\newenvironment{proposition}{\begin{prop}}{\end{prop}}
\title[Towards 3-Dimensional Rewriting Theory]
      {Towards 3-Dimensional Rewriting Theory}
\thanks{This work was partially funded by the french ANR project CATHRE ANR-13-BS02-0005-02.}
\author[S.~Mimram]{Samuel Mimram}
\address{CEA, LIST, LMeASI, 91191 Gif-sur-Yvette, France.}
\email{\href{mailto:samuel.mimram@cea.fr}{samuel.mimram@cea.fr}}
\renewcommand{\C}{\mathcal{C}}
\begin{document}
\begin{abstract}
  String rewriting systems have proved very useful to study monoids. In good
  cases, they give finite presentations of monoids, allowing computations on
  those and their manipulation by a computer. Even better, when the presentation
  is confluent and terminating, they provide one with a notion of canonical
  representative of the elements of the presented monoid. Polygraphs are a
  higher-dimensional generalization of this notion of presentation, from the
  setting of monoids to the much more general setting of $n$-categories. One of
  the main purposes of this article is to give a progressive introduction to the
  notion of \emph{higher-dimensional rewriting system} provided by polygraphs,
  and describe its links with classical rewriting theory, string and term
  rewriting systems in particular. After introducing the general setting, we
  will be interested in proving local confluence for polygraphs presenting
  2\nbd{}categories and introduce a framework in which a finite 3-dimensional
  rewriting system admits a finite number of critical pairs.
\end{abstract}

\maketitle

Recent developments in category theory have established higher-dimensional
categories as a fundamental theoretical setting in order to study situations
arising in various areas of mathematics, physics and computer science. A nice
survey of these can be found in~\cite{baez:rosetta-stone}, explaining how the
use of category theory enables one to unify these apparently unrelated fields of
science, by revealing that their intrinsic algebraic structures are in fact
closely connected. In the last decade, higher-dimensional categories have
therefore emerged as a tool of everyday use for many scientists. The motivation
behind the concept of higher dimensions here is that, in order to have a
fine-grained understanding of the algebraic structures at stake, one should not
only consider morphisms between objects involved, but also morphisms between
morphisms (\ie 2-dimensional morphisms), morphisms between morphisms between
morphisms (\ie 3-dimensional morphisms), and so on. For example, the starting
point of algebraic topology~\cite{hatcher:algebraic-topology} is that one should
not consider points and paths between them in topological spaces, but also
homotopies between paths, and can be refined by also considering homotopies
between homotopies and so on.

The categorical structures considered nowadays are thus becoming more and more
complex, which enables them to capture many details, but the proofs are becoming
more and more complicated too, and we are facing the urge for new tools, both of
a theoretical and practical nature, in order to make them easier and more
manageable. In particular, many proofs of even conceptually simple facts involve
showing the commutativity of diagrams which are big both in size and dimension:
for example, the proof that a commutative strong monad on a symmetric monoidal
category (which is a particular 3-category) is the same notion as a symmetric
monoidal monad involves checking the commutativity of many large diagrams as
detailed in the appendix of~\cite{goubault:logical-relations}, as another
example, the very definition of tricategories (a weak form of 3-category) as
well as the associated morphisms takes up pages, not to mention the associated
coherence theorem~\cite{gordon-power-street:coh-tricat, gurski:tricat}. These
are, among many other, striking examples of the fact that we are slowly
approaching the size limits of diagrams that can be computed by hand in a decent
time, or decently written on paper. Moreover, these computations do not really
constitute the conceptually interesting part of the work, and are often
considered as ``routine checks'' because they are very repetitive and
systematic\ldots which is a good point from a computer scientist's point of
view: we have hope to be able to develop software to check and automate them!

\bigskip

With this goal in mind, rewriting theory is a natural candidate for providing
powerful tools in order to study these algebraic structures. Its uses have
proven very useful for example to construct and manipulate \emph{presentations}
of monoids, which are descriptions of monoids by generators and relations: a
string rewriting system is simply a presentation of a monoid where the relations
are oriented so that they form rewriting rules. When the rewriting system is
normalizing (which is in particular the case when it is locally confluent and
terminating), it provides one with a notion of canonical representative of words
modulo the relations, which turns out to be crucial to manipulate the
presentations. Starting from this point, Burroni has introduced the notion of
\emph{polygraph}~\cite{burroni:higher-word}, rediscovering Street and Power's
\emph{computads}~\cite{street:limit-indexed-by-functors, power:n-cat-pasting},
generalizing the notion of presentation from monoids to higher-dimensional
categories, thus providing us with a good notion of \emph{higher-dimensional
  rewriting system}, which is able to describe and manipulate free
$n$-categories modulo equations on $n$\nbd{}cells. This was later on used intensively
by Lafont in order to construct presentations of various monoidal
categories~\cite{lafont:boolean-circuits} and further studied by
Guiraud~\cite{guiraud:these, guiraud:termination-3-rewr,
  guiraud:three-dimensions-proofs, guiraud:presentations-petri-nets} and
Malbos~\cite{guiraud-malbos:higher-cat-fdt}.

However, much more than a mere adaptation of the well-known techniques of
rewriting theory is needed here. In particular, Lafont discovered a very
interesting fact: contrarily to string or term rewriting systems, these
generalized rewriting systems can give rise to an \emph{infinite} number of
critical pairs, even when they are \emph{finite}!
In this article, we will be interested in extending techniques for computing
critical pairs (we will not detail techniques for showing
termination~\cite{guiraud:termination-3-rewr}). Its main contribution is to
address this problem, in the case of 3-dimensional rewriting systems, by
generalizing the notion of critical pair in order to recover a finite number of
critical pairs for finite rewriting systems.
The present work constitutes a first step in the direction of generalizing
rewriting theory to higher dimensions, but a lot of efforts remain to be done in
order to achieve the task.

\bigskip

Unfortunately, the idea of higher-dimensional rewriting system does not seem to
have spread very much among the community of rewriting theory. We believe this
is partly due to the fact that its fundamental concepts are scattered in various
papers, which sometimes require a strong categorical background, or are even
sometimes considered as ``folklore''. In order to address this, we have done our
best to gradually introduce the topic, in a way accessible to people familiar
with rewriting theory~\cite{baader-nipkow:trat, terese:trs} and basic
(1\nbd{}dimensional) category theory~\cite{maclane:cwm}, and tried to write an
introductory article on the subject. We have favored the exposition instead of
handling directly the most technical matters, trying to provide intuitions about
the structures involved, and giving bibliographical references for the reader
interested in details concerning a particular point. Likewise, a completely
formal and more abstract presentation of all the contributions of the author in
the article -- as well as the other developments exposed by the author
in~\cite{mimram:critical-pairs} -- shall be presented in a subsequent
article~\cite{mimram:2-cp}. We feel that an introductory paper such as this one
was really necessary first, in order for the new developments to be accessible
to a broad audience.

\bigskip

We first recall in Section~\ref{sec:hdrs} the notion of higher-dimensional
rewriting system provided by polygraphs and explain its links with string and
term rewriting systems. The following Section~\ref{sec:pres-n-cat} describes how
these rewriting systems provide a good notion of presentation of
$n$-categories. Finally, we explain in Section~\ref{sec:repr-2-cp} why a finite
3-dimensional rewriting system can give rise to an infinite number of critical
pairs and introduce a theoretical setting to overcome this issue, and conclude
in Section~\ref{sec:future-work}.




\section{Towards higher-dimensional rewriting systems}
\label{sec:hdrs}
In order to generalize the notion of rewriting system to higher dimensions, one
should follow the principles established by topology and category theory: a
$0$-dimensional rewriting system should be a set of ``points'' and an
$(n+1)$\nbd{}dimensional rewriting system should consist of rules which
\emph{rewrite rewriting paths} in an $n$\nbd{}dimensional rewriting
system. Informally, one should think of the well-known situation arising in the
study of $\lambda$\nbd{}calculus: if we consider the $\lambda$\nbd{}terms modulo
$\alpha$-conversion as $0$\nbd{}dimensional terms, the rules for
$\beta$\nbd{}reduction would play the role of a 1\nbd{}dimensional rewriting
system, and the rules for standardization~\cite{mellies:ast}, which rewrite any
reduction sequence into a ``standard'' one, would play the role of a 2\nbd{}dimensional
rewriting system because they rewrite rewriting paths.

The idea is quite appealing, but it is not clear at first how the classical
rewriting frameworks should fit in this picture, and one should first try to
understand in a uniform way the two major examples of rewriting theory:
\emph{string rewriting systems} and \emph{term rewriting systems}. Namely,
string rewriting systems should intuitively be part of $2$-dimensional rewriting
systems because they rewrite words which can be considered as $1$-dimensional
objects: a word~$abc$ can be seen geometrically as a path
\[
\vxym{
  \ar[r]^a&\ar[r]^b&\ar[r]^c&\\
}
\]
Similarly, term rewriting systems should be part of $3$-dimensional rewriting
systems because they rewrite terms which can be considered as $2$-dimensional
objects, if we picture a term as a $2$\nbd{}dimensional diagram, using the
representation often used for sharing graphs: for example, the
term~$f(x,g(y,z))$ can be represented by
\[
\strid{dim_term}
\]
However, this intuition does not give precise directions about how a
4\nbd{}dimensional rewriting system should be defined for example. It turns out
that generalizing a bit the settings of string and term rewriting systems
reveals an inductive pattern in the definition of these rewriting frameworks,
whose discovery motivated the definition of \emph{polygraphs}, which can be
considered as the good notion of higher-dimensional rewriting system. We shall
begin by explaining this in details.

\subsection{Graphs and 1-dimensional rewriting systems}
\subsubsection{Definition}
Rewriting systems in dimension~$1$ should be defined in the most simple way. The
fact that a topological space of dimension 0 is a point suggests that we should
define a \emph{0-signature} simply as a set~$E_0$, whose elements are called
(0-dimensional) \emph{terms}. A 1-dimensional rewriting system on the
signature~$E_0$ is a set of \emph{rules} rewriting a term into another term. It
therefore consists of a set~$E_1$, whose elements are the \emph{rules}, together
with two functions~$s,t:E_1\to E_0$ which to every rule $r\in E_1$ associate its
\emph{source}~$s(r)$ and its \emph{target}~$t(r)$. In other words, a 1-rewriting
system is a diagram
\begin{equation}
  \label{eq:rs-1}
  \vxym{
    E_0&\ar@<-0.7ex>[l]_{s}\ar@<0.7ex>[l]^{t}E_1
  }
\end{equation}
in the category~$\Set$ (the category whose objects are sets and morphisms are
functions). We sometimes write
\[
r\qcolon A\qto B
\]
to indicate that~$r$ is a rule with~$A$ as source and~$B$ as target and say
that~$A$ \emph{one-step rewrites} by~$r$ to~$B$. We simply write~$A\to B$ when
there exists a rule $r:A\to B$, write~$\to^*$ for the transitive closure of the
relation~$\to$, and say that a term~$A$ \emph{rewrites} to a term~$B$
when~$A\to^*B$.

\subsubsection{The free category generated by a rewriting system}
\label{sec:rs-1-fcat}
Notice that a rewriting system of the form~\eqref{eq:rs-1} can also be seen as a
graph with~$E_0$ as set of vertices and~$E_1$ as set of edges, the functions~$s$
and~$t$ indicating the source and the target of an edge. Such a graph generates
a \emph{free category}~$\C$ which is the smallest category with~$E_0$ as set of
objects and containing all the edges~$r:A\to B$ in~$E_1$ as morphisms. We
write~$E_1^*$ for the set of morphisms of this category, and~$s^*,t^*:E_1^*\to
E_0$ for the functions which to every morphism of this category associate its
source and target respectively. Since the set~$E_1^*$ contains the set~$E_1$,
there is an injection~$i_1:E_1\to E_1^*$. Moreover, since~$\C$ is the free
category generated by the graph~\eqref{eq:rs-1}, $E_1^*$ is the smallest set
containing~$E_1$ and closed under identities and composition: $E_1^*$ is the set
of \emph{paths} in the graph, with concatenation as composition and empty paths
as identities, with the obvious source and target functions~$s_1^*$
and~$t_1^*$. The data generated by this free construction can be summarized by a
diagram
\[
\vxym{
  &\ar@<-0.7ex>[dl]_{s}\ar@<0.7ex>[dl]^{t}E_1\ar[d]^{i_1}\\
  E_0&\ar@<-0.7ex>[l]_-{\fpoly{s}}\ar@<0.7ex>[l]^-{\fpoly{t}}\fpoly{E_1}\\
}
\]
in~$\Set$, together with the composition morphism~$E_1^*\times_{E_0}E_1^*\to
E_1^*$ which to every pair of composable morphisms associate their composition
(\ie the concatenation of the two paths) and the identity morphism~$E_0\to
E_1^*$ which to every object in~$E_0$ associates an identity morphism (\ie the
empty path on this object): these two last operations define a structure of
category on the
graph~$\vxym{E_0&\ar@<-0.7ex>[l]_{\fpoly{s}}\ar@<0.7ex>[l]^{\fpoly{t}}\fpoly{E_1}}$
(a category is a graph together with composition and identity operations
satisfying axioms).
Moreover, the diagram above ``commutes'' in the sense that
\[
s^*\circ i=s
\qqtand
t^*\circ i=t
\]
which expresses the fact that if~$r:A\to B$ is an arrow in the graph, then it
has the same source and target if we see it as a path of length 1 in the
graph. Notice that, by definition, a term~$A$ rewrites to a term~$B$ when there
exists a path from~$A$ to~$B$ in the graph, \ie when there exists a morphism
from~$A$ to~$B$ in the generated category: the morphisms of the freely generated
category can be seen as rewriting sequences.

\begin{exa}
  \label{ex:fcat}
  Let us give examples of categories generated by graphs.
  \begin{enumerate}
  \item Consider the graph with two edges~$E_0=A,B$ and two arrows~$f:A\to B$
    and~\hbox{$g:B\to A$}
    \[
    \vxym{
      A\ar@/^/[rr]^f&&\ar@/^/[ll]^gB
    }
    \]
    \ie formally
    \[
    E_0=\{A,B\}
    \qquad
    E_1=\{f,g\}
    \qquad
    s(f)=t(g)=A
    \qquad
    t(f)=s(g)=B
    \]
    It generates a category which has the words of the form $(fg)^*$ as
    morphisms from~$A$ to~$A$ (where the star denotes the Kleene star),
    $(fg)^*f$ from~$A$ to $B$, $g(fg)^*$ from $B$ to $A$, and $(gf)^*$ from $B$
    to $B$, \ie formally $E_1^*=(fg)^*\uplus g(fg)^*\uplus (fg)^*f\uplus(gf)^*$.
  \item The following graph with one vertex and one arrow
    \[
    \vxym{
      \ast\ar@(ur,dr)^1
    }
    \]
    generates a category with one object isomorphic to the category whose
    morphisms are natural integers~$\N$ with addition as composition and $0$ as
    identity.
  \item More generally, given a set~$\Sigma$, the graph~$G_\Sigma$ with one
    object~$\ast$ and~$\Sigma$ as set of arrows (all from~$\ast$ to~$\ast$)
    generates a category isomorphic to the category with one object and the free
    monoid~$\Sigma^*$ on~$\Sigma$ as set of morphisms, with concatenation as
    composition and empty word as identity. In other words, the category
    generated by~$G_\Sigma$ is essentially the same as the free
    monoid~$\Sigma^*$ generated by~$\Sigma$.
  \end{enumerate}
\end{exa}

\subsection{String and 2-dimensional rewriting system}
\label{sec:rs-2}
\subsubsection{String rewriting systems}
Suppose given an \emph{alphabet}~$E_1$. We write~$E_1^*$ for the free monoid
generated by~$E_1$ and $i_1:E_1\to E_1^*$ for the injection sending a letter of
the alphabet to the corresponding word with only one letter. A \emph{string
  rewriting system}~$R$ on the alphabet~$E_1$ is usually defined as a relation
$R\subseteq E_1^*\times E_1^*$ over the free monoid generated by~$E_1$. One says
that a word $u$ \emph{one-step rewrites} to a word~$v$, what we write~$u\To v$,
when there exists two words~$w_1,w_2\in E_1^*$ such that
\begin{equation}
  \label{eq:rs-2-osr}
  u=w_1u'w_2
  \qquad
  v=w_1v'w_2
  \qtand
  (u',v')\in R
\end{equation}
\noindent If we write~$\To^*$ for the reflexive and transitive closure of the
relation~$\To$, a word~$u$ \emph{rewrites} to a word~$v$ when~$u\To^* v$.

In order to make the definition closer to the one given in previous section, we
can make a small generalization and allow to have two distinct rules both
rewriting the same word~$u$ to the same word~$v$, by giving \emph{names} to the
rules. Such a string rewriting system can be formalized as a set~$E_2$ whose
elements are the (names of) the rules together with two
functions~$s_1,t_1:E_2\to E_1^*$ indicating respectively the \emph{source} and
\emph{target} of a rule:
\[
\vxym{
  E_1\ar[d]_{i_1}&E_2\ar@<-0.7ex>[dl]_{s_1}\ar@<0.7ex>[dl]^{t_1}\\
  E_1^*
}
\]
and again we write $\rho:u\To v$ to indicate that~$\rho\in E_2$ is a rule
with~$s_1(\rho)=u$ and~$t_1(\rho)=v$.

\begin{exa}
  \label{ex:str-rs}
  One can for example consider the rewriting system on an alphabet with two
  letters~$a,b$ and a rule~$ba\To ab$, which would be formalized by
  \[
  E_1=\{a,b\}
  \qquad
  E_2=\{\rho\}
  \qquad
  s_1(\rho)=b\otimes a
  \qquad
  t_1(\rho)=a\otimes b
  \]
  where $\otimes$ denotes the concatenation operation on words.
\end{exa}

\noindent
More details and examples of uses of such rewriting systems are given in
Section~\ref{sec:mon-pres}.

\subsubsection{2-dimensional rewriting systems}
\label{sec:2-rs}
Remember that our general plan is to express 2\nbd{}dimensional rewriting
systems as rewriting on 1-dimensional rewriting paths. In order to do so, one
should have in mind that a set is ``the same as'' a graph with one
vertex. Formally, there exist two functors
\[
\vxym{
  \Set\ar@/^/[rr]^F&&\ar@/^/[ll]^G\Graph_*
}
\]
which are inverse one of each other, where~$\Graph_*$ is the full subcategory
of~$\Graph$ (the category of graphs) whose objects are graphs with only one
vertex: the functor~$F$ sends a set~$\Sigma$ to the graph with one vertex~$\ast$
and~$\Sigma$ as set of arrows (all going from~$\ast$ to~$\ast$) and the
functor~$G$ sends a graph with one vertex to its set of edges. Similarly, a
monoid is ``the same as'' a category with one object, the multiplication and
unit of a monoid corresponding to the composition and identity of the
category. For instance, we have explained in the last point of
Example~\ref{ex:fcat} that the free monoid generated by a set is, with respect
to this point of view, ``the same as'' the free category generated by the graph
corresponding to the set.

Therefore, instead of having a mere set as alphabet, one might equivalently
declare that an alphabet is a 1-dimensional rewriting system
\begin{equation}
  \label{eq:rs-2-alph}
  \vxym{
    E_0&\ar@<-0.7ex>[l]_{s_0}\ar@<0.7ex>[l]^{t_0}E_1
  }
\end{equation}
as defined in Section~\ref{sec:rs-2}, with the set~$E_0$ being reduced to one
element. Again, this does not bring more information than the set~$E_1$ itself,
since the set~$E_0$ is uniquely defined up to a canonical isomorphism (it
contains only one object) and the functions~$s_0,t_0:E_1\to E_0$ are uniquely
defined because~$E_0$ is terminal in~$\Set$. If we write~$E_1^*$ for the set of
morphisms of the category generated by the graph~\eqref{eq:rs-2-alph} as in
previous section, a 2-dimensional rewriting system can thus be defined as a
diagram
\begin{equation}
  \label{eq:rs-2}
  \vxym{
    &E_1\ar[d]_{i_1}\ar@<-0.7ex>[dl]_{s_0}\ar@<0.7ex>[dl]^{t_0}&E_2\ar@<-0.7ex>[dl]_{s_1}\ar@<0.7ex>[dl]^{t_1}\\
    E_0&\ar@<-0.7ex>[l]_{\fpoly{s_0}}\ar@<0.7ex>[l]^{\fpoly{t_0}}\fpoly{E_1}\\
  }
\end{equation}
in~$\Set$, with~$E_0$ containing only one element, together with a structure of
category on the graph
\begin{equation}
  \label{eq:rs-2-gcat}
  \vxym{
    E_0&\ar@<-0.7ex>[l]_{\fpoly{s_0}}\ar@<0.7ex>[l]^{\fpoly{t_0}}\fpoly{E_1}
  }
\end{equation}

It turns out that the supposition that the set~$E_0$ is reduced to one element
is not necessary in order to proceed with the usual developments of string
rewriting theory. We thus drop this condition in the following and suppose that
an alphabet for a 2-dimensional rewriting system is any 1-dimensional rewriting
system. This amounts to generalizing string rewriting systems to the case where
the letters $a:A\to B$ are typed (with~$A$ as source and~$B$ as target), and to
consider only \emph{composable} string of letters as words (the words are thus
also typed). In order for substitution to be well-defined, we have to suppose
that a word $u:A\to B$ can only be rewritten to a word~$v:A'\to B'$ of the same
type (\ie $A'=A$ and $B'=B$). It is enough to suppose that this property is
satisfied for the rules, which amounts to impose the further axiom that in any
2-dimensional rewriting system~\eqref{eq:rs-2}, the equations
\begin{equation*}
  \label{eq:gen-1-globular}
  s_0^*\circ s_1
  =
  s_0^*\circ t_1
  \qqtand
  t_0^*\circ s_1
  =
  t_0^*\circ t_1
\end{equation*}
are satisfied (these are sometimes called the \emph{globular equations}).
An example of such a rewriting system is given in Section~\ref{sec:rs-2-ex}.

\subsubsection{The free 2-category generated by a rewriting system}
\label{sec:rs-free-2-cat}
In the same way that a 1\nbd{}dimensional rewriting system freely generates a
1-category (\ie a category), a 2-dimensional rewriting system freely generates a
2-category: it is the smallest 2-category with~\eqref{eq:rs-2-gcat} as
underlying category and the containing the elements of~$E_2$ as 2-cells. We
recall that the notion of 2-category is defined as follows.

\begin{defi}
  \label{def:2-cat}
  A \emph{2-category}~$\C$ consists of the following data.
  \begin{itemize}
  \item A class~$\C_0$ of \emph{0-cells}.
  \item A category~$\C(A,B)$ for every pair of 0-cells~$A$ and~$B$. Its objects
    $f:A\to B$ are called \emph{1-cells}, its morphisms $\alpha:f\To g:A\to B$
    are called \emph{2-cells}, composition is written~$\circ$ and called
    \emph{vertical composition}, and identities are called \emph{vertical
      identities}.
  \item A functor $\otimes_{A,B,C}:\C(A,B)\times\C(B,C)\to\C(A,C)$ for every
    objects $A$, $B$ and $C$ called \emph{horizontal composition} (we will drop
    the subscripts of $\otimes$ in the following).
  \item A 1-cell $\id_A:A\to A$ for every object~$A$ called \emph{vertical
      identity}.
  \end{itemize}
  These should be such that the following properties are satisfied.
  \begin{itemize}
  \item Horizontal composition is associative: for every 0-cells $A$, $B$, $C$
    and $D$, for every 1-cells $f,f':A\to B$, $g,g':B\to C$ and $h,h':C\to D$,
    for every 2-cells $\alpha:f\To f'$, $\beta:g\To g'$ and $\gamma:h\To h'$,
    \[
    (f\otimes g)\otimes h=f\otimes(g\otimes h)
    \quad
    (\alpha\otimes\beta)\otimes\gamma=\alpha\otimes(\beta\otimes\gamma)
    \quad
    (f'\otimes g')\otimes h'=f'\otimes(g'\otimes h')
    \]
  \item Horizontal identities are neutral elements for horizontal composition:
    for every 0\nbd{}cells $A$ and $B$, for every 1-cells $f,f':A\to B$, for
    every 2-cell $\alpha:f\To f'$,
    \[
    \id_A\otimes f=f=f\otimes\id_B
    \quad
    \id_{\id_A}\otimes\alpha=\alpha=\alpha\otimes\id_{\id_B}
    \quad
    \id_A\otimes f'=f'=f'\otimes\id_B
    \]
  \end{itemize}
\end{defi}


If we write~$E_2^*$ for the set of 2-cells of the 2-category generated by the
2-rewriting system~\eqref{eq:rs-2}, $i_2:E_2\to E_2^*$ for the canonical
injection expressing the inclusion of~$E_2$ into~$E_2^*$ and
$s_1^*,t_1^*:E_2\to E_1^*$, we thus get a diagram
\[
\vxym{
  &E_1\ar[d]_{i_1}\ar@<-0.7ex>[dl]_{s_0}\ar@<0.7ex>[dl]^{t_0}&E_2\ar@<-0.7ex>[dl]_{s_1}\ar@<0.7ex>[dl]^{t_1}\ar[d]^{i_2}\\
  E_0&\ar@<-0.7ex>[l]_{\fpoly{s_0}}\ar@<0.7ex>[l]^{\fpoly{t_0}}\fpoly{E_1}&\ar@<-0.7ex>[l]_{\fpoly{s_1}}\ar@<0.7ex>[l]^{\fpoly{t_1}}\fpoly{E_2}\\
}
\]
in~$\Set$, together with a structure of 2-category on the 2-graph
\[
\vxym{
  E_0&\ar@<-0.7ex>[l]_{\fpoly{s_0}}\ar@<0.7ex>[l]^{\fpoly{t_0}}\fpoly{E_1}&\ar@<-0.7ex>[l]_{\fpoly{s_1}}\ar@<0.7ex>[l]^{\fpoly{t_1}}\fpoly{E_2}
}
\]
which commutes in the sense that
\[
s_1^*\circ i_2=s_1
\qqtand
t_1^*\circ i_2=t_1
\]

In the same way as previously, the 2-cells of this 2-category correspond to the
rewriting paths: a (typed) word~$u:A\to B$ rewrites to a (typed) word~$w:A\to B$
if and only if there exists a 2-cell~$\alpha:u\To v:A\to B$. In particular, if
there is only one 0-cell, this notion of rewriting corresponds the usual notion
of rewriting for string rewriting systems. As previously, notice that a 1-cell
is rewritten into a \emph{parallel} 1-cell, \ie a 1-cell having the same source
and same target.


\subsubsection{A diagrammatic notation for 2-cells}
\label{sec:2-cells}
Before going on describing how to incorporate string rewriting systems in our
framework, we shall try to understand a bit more 2-categories, and the previous
construction.

Given a 2-category~$\C$, a 2-cell $\alpha:u\To v:A\to B$ is often written
diagrammatically as
\[
\vxym{
A\ar@/^2ex/[r]^u\ar@/_2ex/[r]_v\ar@{}[r]|-{\alpha\Downarrow}&B
}
\]
With this notation, vertical composition corresponds to vertical juxtaposition:
the composite~$\beta\circ\alpha:u\To w:A\to B$ of a pair of 2-cells $\alpha:u\To
v:A\to B$ and $\beta:v\To w:A\to B$ is denoted by
\begin{equation}
  \label{eq:vcomp}
  \vxym{
    A\ar@/^4ex/[r]^u\ar[r]|-v\ar@/_4ex/[r]_w\ar@{}[r]^-{\alpha\Downarrow}\ar@{}[r]_-{\beta\Downarrow}&B
  }
\end{equation}
and the horizontal composition~$\alpha\otimes\beta:(u\otimes v)\To(u'\otimes
v'):A\to C$ of a pair of 2-cells $\alpha:u\To u':A\to B$ and~$\beta:v\To
v':B\to C$ corresponds diagrammatically to horizontal juxtaposition
\begin{equation}
  \label{eq:hcomp}
  \vxym{
    A\ar@/^2ex/[r]^u\ar@/_2ex/[r]_{u'}\ar@{}[r]|-{\alpha\Downarrow}&B\ar@/^2ex/[r]^v\ar@/_2ex/[r]_{v'}\ar@{}[r]|-{\beta\Downarrow}&C
  }
\end{equation}
For every 1-cell~$u:A\to B$, a vertical identity 2-cell
\[
\vxym{
  A\ar@/^2ex/[r]^u\ar@/_2ex/[r]_u\ar@{}[r]|-{\id_f\Downarrow}&B
}
\]
should be given which acts as neutral element for vertical composition.

The functoriality of the horizontal composition implies a last fundamental
identity called the \emph{exchange law}. Given four 2-cells $\alpha$, $\beta$,
$\alpha'$ and~$\beta'$ as in the diagram
\begin{equation}
  \label{eq:exch-law}
  \vxym{
    A\ar@/^4ex/[r]^u\ar[r]|-v\ar@/_4ex/[r]_w\ar@{}[r]^-{\alpha\Downarrow}\ar@{}[r]_-{\beta\Downarrow}&B\ar@/^4ex/[r]^{u'}\ar[r]|-{v'}\ar@/_4ex/[r]_{w'}\ar@{}[r]^-{\alpha'\Downarrow}\ar@{}[r]_-{\beta'\Downarrow}&C
  }
\end{equation}
there are two ways to compose them altogether: either first vertically and then
horizontally or the converse. The exchange law which is satisfied in any
2-category states that these two composites are equal:
\[
(\beta\circ\alpha)\otimes(\beta'\circ\alpha')
\qeq
(\beta\otimes\beta')\circ(\alpha\otimes\alpha')
\]
Thanks to this, the diagram~\eqref{eq:exch-law} is not ambiguous. This law can
equivalently be reformulated slightly differently as the \emph{Godement law}:
given two 2-cells $\alpha$ and~$\beta$ as in~\eqref{eq:hcomp}, the following
identity is always satisfied:
\begin{equation}
  \label{eq:godement}
  (\id_{u'}\otimes\beta)\circ(\alpha\otimes\id_v)
  \qeq
  (\alpha\otimes\id_{v'})\circ(\id_u\otimes\beta)
\end{equation}

\subsubsection{An example}
\label{sec:rs-2-ex}
Consider a 2-dimensional rewriting system of the form~\eqref{eq:rs-2} with
\[
E_0=\{A,B\}
\qquad
E_1=\{f,g\}
\qquad
E_2=\{\rho\}
\]
and source and target map defined by
\[
s_0(f)=t_0(g)=A
\qquad
t_0(f)=s_0(g)=B
\qquad
s_1(\rho)=f\otimes g\otimes f\otimes g
\qquad
t_1(\rho)=f\otimes g
\]
\ie using the previously introduced notations
\[
f:A\to B
\qquad
g:B\to A
\qquad
\rho:f\otimes g\otimes f\otimes g\To f\otimes g:A\to B
\]
In particular, the 1-rewriting system playing the role of signature for this
2-rewriting system is the one given in first point of Example~\ref{ex:fcat}, and
the~$\otimes$ operation denotes the composition of the category it generates.

As explained earlier, $A$ and~$B$ should be seen as types for words, $f$ and~$g$
should be seen as the letters of the signature and the $\otimes$ operation as
the concatenation for words (we thus sometimes omit it and write~$fg$ instead
of~$f\otimes g$). These letters are typed and one only considers words obtained
by composing compatible letters; for example~$f\otimes g\otimes f$ is a word of
type~$A\to B$, graphically represented as the path
\[
\vxym{
  A\ar[r]^f&B\ar[r]^g&A\ar[r]^f&B
}
\]
whereas $f\otimes f$ does not make sense. Finally, $\rho$ is a rewriting rule,
which we may represent as a 2-cell
\begin{equation}
  \label{eq:ex-rho}
  \vxym{
    &B\ar[r]^g&A\ar@{}[d]|-{\rho\Downarrow}\ar[r]^f&B\ar@/^/[dr]^g&\\
    A\ar@/^/[ur]^f\ar[rr]_f&&B\ar[rr]_g&&A\\
  }
\end{equation}
which transforms a word into another.

The rewriting paths are the morphisms in the 2-category generated by the
2-rewriting system: its 2-cells are formal vertical and horizontal composites of
rules quotiented by the axioms of 2-categories. Vertical compositions correspond
to sequences of rewriting: if a word~$u$ rewrites to a word~$v$, which is
witnessed by a morphism~$\alpha:u\To v$, and the word~$v$ rewrites to a
word~$w$, which is witnessed by a morphism~$\beta:v\To w$, then the composite
morphism $\beta\circ\alpha:u\To w$ witnesses the fact that~$u$ rewrites to~$w$,
which is represented diagrammatically in~\eqref{eq:vcomp}. Horizontal
compositions correspond similarly to rewritings done in parallel: if a word~$u$
rewrites to~$u'$ by~$\alpha:u\To u'$ and a word~$v$ rewrites to~$v'$
by~$\beta:v\To v'$ then $u\otimes v$ rewrites to~$u'\otimes v'$
by~$\alpha\otimes\beta$, as shown in~\eqref{eq:hcomp}. For example, the usual
condition for one-step rewrites~\eqref{eq:rs-2-osr} can be recovered as follows:
the generator~$\rho$ rewrites~$fgfg$ into~$fg$, so, given two words~$u:C\to A$
and~$v:A\to D$ (where~$C$ and~$D$ are either~$A$ or~$B$), the word~$ufgfgv$
rewrites to~$ufgv$, which is witnessed by the morphism
\[
\id_u\otimes\rho\otimes\id_v
\qcolon
u\otimes f\otimes g\otimes f\otimes g\otimes v
\qTo
u\otimes f\otimes g\otimes v
\qcolon
C\qto D
\]
which can be represented as
\[
\vxym{
  &&B\ar[r]^g&A\ar@{}[dd]|-{\rho\Downarrow}\ar[r]^f&B\ar@/^/[dr]^g&\\
  C\ar@/^4ex/[r]^u\ar@/_4ex/[r]_u\ar@{}[r]|-{\id_u\Downarrow}&A\ar@/^/[ur]^f\ar@/_/[drr]_f&&&&A\ar@/^4ex/[r]^v\ar@/_4ex/[r]_v\ar@{}[r]|-{\id_v\Downarrow}&D\\
  &&&B\ar@/_/[urr]_g\\
}
\]
We usually omit drawing the identity 2-cells and simply write
\[
\vxym{
  &&B\ar[r]^g&A\ar@{}[d]|-{\rho\Downarrow}\ar[r]^f&B\ar@/^/[dr]^g&\\
  C\ar[r]^u&A\ar@/^/[ur]^f\ar[rr]_f&&B\ar[rr]_g&&A\ar[r]^v&D\\
}
\]
for this 2-cell.

As another example of the fact that 2-cells represent rewriting paths, consider
the word $fgfgfg$ which can be rewritten to~$fg$:
\[
fgfgfg
\qTo
fgfg
\qTo
fg
\]
Actually, there are \emph{two} such rewriting paths because, in the first step,
the rule~$\rho$ can be applied either on the left or on the right of the
word. These two distinct paths correspond to the morphisms
\[
\rho\circ(\rho\otimes f\otimes g)
\qqtand
\rho\circ(f\otimes g\otimes\rho)
\]
respectively represented diagrammatically by
\begin{equation}
  \label{eq:rs-2-ex-rew-1}
  \vxym{
    &B\ar[r]^g&A\ar@{}[d]|-{\rho\Downarrow}\ar[r]^f&B\ar@/^/[dr]^g&\\
    A\ar@/_/[drrr]_f\ar@/^/[ur]^f\ar[rr]_f&&B\ar[rr]_g&&A\ar[r]^f&B\ar[r]^g&A\\
    &&&B\ar@/_/[urrr]_g\ar@{}[u]|-{\rho\Downarrow}
  }
\end{equation}
and
\begin{equation*}
  \vxym{
    &&&B\ar[r]^g&A\ar@{}[d]|-{\rho\Downarrow}\ar[r]^f&B\ar@/^/[dr]^g&\\
    A\ar@/_/[drrr]_f\ar[r]^f&B\ar[r]^g&A\ar@/^/[ur]^f\ar[rr]_f&&B\ar[rr]_g&&A\\
    &&&B\ar@{}[u]|-{\rho\Downarrow}\ar@/_/[urrr]_g
  }
\end{equation*}
Notice that it is also possible in this setting to express the rewriting of two
disjoint parts of a word in parallel, such as
\[
\rho\otimes\rho
\qcolon
fgfgfgfg
\qTo
fgfg
\]
which is represented diagrammatically by
\[
\vxym{
  &B\ar[r]^g&A\ar[r]^f\ar@{}[d]|-{\rho\Downarrow}&B\ar@/^/[dr]^g&&B\ar[r]^g&A\ar[r]^f\ar@{}[d]|-{\rho\Downarrow}&B\ar@/^/[dr]^g&\\
  A\ar[rr]_f\ar@/^/[ur]^f&&B\ar[rr]_g&&A\ar[rr]_f\ar@/^/[ur]^f&&B\ar[rr]_g&&A\\
}
\]


\subsubsection{String diagrams}
The diagrammatic notation for cells in 2-categories introduced in
Section~\ref{sec:2-cells} is quite useful and convenient. However, from such diagrams,
it is sometimes difficult to get intuitions about the way morphisms should be
manipulated and having another, more ``geometrical'', representation of
morphisms can be very useful. Such a notation was formally introduced by Joyal
and Street~\cite{joyal-street:geometry-tensor-calculus} and is called
\emph{string diagrams}.

We shall first explain this notation on an example. Consider the 2-cell~$\rho$
in the 2\nbd{}category described in previous section, whose diagrammatic
representation is given in~\eqref{eq:ex-rho}. We can consider this 2-cell as a
device with four typed inputs respectively of type $f$, $g$, $f$ and $g$, which
produces two outputs respectively of type~$f$ and~$g$. It is thus natural to
depict it as
\[
\strid{ex_rho}
\]
The 2-cell~$\rho$ is represented by the point in the center, each of the 1-cells
in its input (\resp output) correspond to a wire going in (\resp out of) it. The
portions of the plane delimited by the wires correspond to 0-cells at the source
or target of the 1-cells corresponding to the delimiting wires. Since, in this
representation, 2-cells consist of points linked by strings (or wires), it is
called a \emph{string diagram}.


Vertical composition amounts to vertically juxtaposing diagrams and ``linking
the wires'', whereas horizontal composition is given by horizontal juxtaposition
of diagrams. The two diagrams for composition~\eqref{eq:vcomp}
and~\eqref{eq:hcomp} are thus drawn respectively by
\[
\strid{vcomp}
\qqtand
\strid{hcomp}
\]
and vertical identities are represented by wires. For example, the
string-diagrammatic notation of the morphism~\eqref{eq:rs-2-ex-rew-1} is
\[
\strid{ex_rho_rho}
\]

These diagrams should be considered modulo planar continuous deformations. In
particular, the Godement exchange law~\eqref{eq:godement} is satisfied
\[
\strid{godement_l}
\qeq
\strid{godement_r}
\]
because the diagram on the left can be deformed into the diagram on the right,
and more generally, it can be checked that all the axioms of 2-categories are
satisfied.

\subsection{Term and 3-dimensional rewriting systems}
\subsubsection{Term rewriting systems}
\label{sec:trs}
In the same way string rewriting systems can be generalized to 2-dimensional
rewriting systems, which rewrite rewriting paths in 1-dimensional rewriting
systems, term rewriting systems can be generalized to 3-dimensional rewriting
systems which rewrite rewriting paths in 2-dimensional rewriting systems.

A \emph{signature} $(\Sigma,a)$ consists of a set~$\Sigma$ called
\emph{alphabet}, whose elements are called \emph{symbols}, together with a
function $a:\Sigma\to\N$ which to every symbol~$f\in\Sigma$ associates its
\emph{arity}. We also suppose fixed a set~$\Var$ of \emph{variables}. A
\emph{term} is either a variable $x\in\Var$ or of the form $f(t_1,\ldots,t_n)$
where the $t_i$ are terms and~$f$ is a symbol of arity~$n$. Given two terms $t$
and $u$ and a variable~$x$, we denote by~$t[u/x]$ the term obtained from~$t$ by
substituting all the occurrences of the variable~$x$ by~$u$. We write~$\Sigma^*$
for the set of terms generated by the signature~$\Sigma$. A
\emph{substitution}~$\sigma:\Var\to\Sigma^*$ is a function which to every
variable associates a term. Given a term~$t$, we write~$t[\sigma]$ for the term
obtained from~$t$ by replacing every variable~$x$ by~$\sigma(x)$. A \emph{term
  rewriting system}~$R$ on such a signature is a set
$R\subseteq\Sigma^*\times\Sigma^*$ of pairs of terms. A term~$t$ \emph{one-step
  rewrites} to a term~$u$, what we write~$t\TO u$, when there exists a
rule~$(l,r)\in R$ and a substitution~$\sigma$ such that $u=l[\sigma]$
and~$v=r[\sigma]$. We write $\TO^*$ for the transitive closure of the
relation~$\TO$ on terms, and say that a term~$t$ \emph{rewrites} to a term~$u$
whenever~$t\TO^*u$.

We now generalize the notion of term rewriting system so that it appears as an
extension of 2-dimensional rewriting systems. Having done that, we shall be able
to see a term as a rewriting path in a 2-dimensional rewriting system: a
3-dimensional rewriting system will thus consist of rewriting rules which relate
terms, \ie rewriting rules which rewrite rewriting paths in a 2-dimensional
rewriting system. We should follow the intuition given by string diagrams. For
example, if we consider a signature~$\Sigma=\{f,g\}$, the two symbols~$f$
and~$g$ being of arity 2, as mentioned in the introduction, the
term~$f(x,g(y,z))$, where $x$, $y$ and $z$ are variables, can be drawn as a tree
\[
\strid{dim_term}
\]
which looks very much like the string-diagrammatic notation for a morphism
$f\circ(\id\otimes g)$, whose diagrammatic notation would be
\[
\vxym{
  &&\ar@/^/[dr]\ar@{}[d]|-{g\Downarrow}\\
  \ar[r]\ar@/_6ex/[rrr]&\ar[rr]\ar@/^/[ur]\ar@{}[dr]|-{f\Downarrow}&&\\
  &&&\\
}
\]
In order to make this intuition formal, recall from Example~\ref{ex:fcat} that
the monoid~$\N$ of integers can be seen as the category generated by the graph
with only one vertex and one edge: if we write~$E_0=\{*\}$ for the set of
vertices ($*$ being the only vertex) and~$E_1=\{1\}$ for the set of edges ($1$
being the only edge, with~$*$ both as source and target) and follow the
notations of Section~\ref{sec:rs-1-fcat}, we have the diagram
\[
\vxym{
  &E_1\ar[d]_{i_1}\ar@<-0.7ex>[dl]_{s_0}\ar@<0.7ex>[dl]^{t_0}\\
  E_0&\ar@<-0.7ex>[l]_{\fpoly{s_0}}\ar@<0.7ex>[l]^{\fpoly{t_0}}\fpoly{E_1}\\
}
\]
where~$E_1^*\cong\N$ is the set of morphisms in the category generated by the
graph (\ie the paths in the graph). For example, the integer 3 corresponds to
the path $1\otimes 1\otimes 1$:
\[
\vxym{
  \ast\ar[r]^1&\ast\ar[r]^1&\ast\ar[r]^1&\ast
}
\]
and more generally an integer~$n$ to the composition of $n$~copies of~$1$. The
arity can thus be seen as an arrow~$a:\Sigma\to E_1^*$. In order to be
consistent with the previous notations, we write~$E_2$ instead of~$\Sigma$
and~$s_1$ instead of~$a$, since the arity denotes the number of inputs. We also
write~$t_1:E_2\to E_2^*$ for the constant function equal to~$1$: it denotes the
\emph{coarity} of a symbol (its number of outputs), which is always~$1$ in the
setting of term rewriting systems. The signature of a term rewriting system can
thus be specified by a 2-rewriting system
\[
\vxym{
  &E_1\ar[d]_{i_1}\ar@<-0.7ex>[dl]_{s_0}\ar@<0.7ex>[dl]^{t_0}&E_2\ar@<-0.7ex>[dl]_{s_1}\ar@<0.7ex>[dl]^{t_1}\\
  E_0&\ar@<-0.7ex>[l]_{\fpoly{s_0}}\ar@<0.7ex>[l]^{\fpoly{t_0}}\fpoly{E_1}\\
}
\]
where~$E_0=\{*\}$, $E_1=\{1\}$ and~$t_1$ is the constant function equal
to~$1$. For example, the previously considered signature with two symbols $f$,
$g$ of arity 2 is represented by the 2-rewriting system such that
\begin{equation}
  \label{eq:ex-rs-3-sig}
  E_2=\{f,g\}
  \qtand
  s_1(f)=s_1(g)=1\otimes 1
\end{equation}

Any such signature being a 2-rewriting system, it generates a 2-category~$\C$
with~$E_2^*$ as set of 2-cells. As previously, we write~$i_2:E_2\to E_2^*$ for
the canonical injection and \hbox{$s_1^*,t_1^*:E_2^*\to E_1^*$} for the maps
associating to each 2-cell its source and target respectively. Since~$E_0=\{*\}$,
the 2-category~$\C$ has only one 0-cell~$*$. Since~$E_1=\{1\}$, the 1-cells of
the 2-category~$\C$ are in bijection with~$\N$ (as explained above). It is easy
to show that the 2\nbd{}cells~$\alpha:n\To 1:*\to *$ of~$\C$ are in bijection
with linear terms with variables in~$x_1,\ldots,x_n$: a term is \emph{linear}
when all the variables~$x_1,\ldots,x_n$ appear exactly once in the term, in this
order, \eg with the signature~\eqref{eq:ex-rs-3-sig}, the term
$f(x_1,g(x_2,x_3))$ is linear whereas~$f(x_1,x_1)$ and~$g(x_2,x_1)$ are
not. More generally, the 2-cells $\alpha:n\To m:*\to *$ are in bijection with
linear $m$-uples of terms with~$n$ variables $x_1,\ldots,x_n$. For example, with
the signature~\eqref{eq:ex-rs-3-sig}, the term
\[
(f\circ(f\otimes\id_1))\otimes(g\circ(f\otimes f))
\]
which can be represented diagrammatically as
\[
\vxym{
  &\ast\ar@/^/[dr]^1\ar@{}[d]|-{f\Downarrow}&&&\ast\ar@/^/[dr]^1\ar@{}[d]|-{f\Downarrow}&&\ast\ar@/^/[dr]^1\ar@{}[d]|-{f\Downarrow}\\
  \ast\ar@/^/[ur]^1\ar[rr]\ar@/_8ex/[rrr]_1&&\ar@{}[dl]|-{f\Downarrow}\ast\ar[r]^1&\ast\ar@/^/[ur]^1\ar[rr]\ar@/_8ex/[rrrr]_1&&\ast\ar@/^/[ur]^1\ar[rr]\ar@{}[d]|-{g\Downarrow}&&\ast\\
  &&&&&&\\
}
\]
or string-diagrammatically as
\[
\strid{lterm_ex}
\]
represents the pair $f(f(x_1,x_2),x_3),\ g(f(x_4,x_5),f(x_6,x_7))$.

A term rewriting system operating on linear terms can thus be thought as
rewriting rewriting paths in a 2-dimensional rewriting system and be represented
as a diagram
\begin{equation}
  \label{eq:rs-3}
  \vxym{
    &E_1\ar[d]_{i_1}\ar@<-0.7ex>[dl]_{s_0}\ar@<0.7ex>[dl]^{t_0}&E_2\ar[d]_{i_2}\ar@<-0.7ex>[dl]_{s_1}\ar@<0.7ex>[dl]^{t_1}&E_3\ar@<-0.7ex>[dl]_{s_2}\ar@<0.7ex>[dl]^{t_2}\\
    E_0&\ar@<-0.7ex>[l]_{\fpoly{s_0}}\ar@<0.7ex>[l]^{\fpoly{t_0}}\fpoly{E_1}&\ar@<-0.7ex>[l]_{\fpoly{s_1}}\ar@<0.7ex>[l]^{\fpoly{t_1}}\fpoly{E_2}\\
  }
\end{equation}
where~$E_2$ is the alphabet and~$E_3$ is the set of rewriting rules, such that
rules rewrite a term into a term of the same type:
\[
s_1^*\circ s_2=s_1^*\circ t_2
\qqtand
t_1^*\circ s_2=t_1^*\circ t_2
\]
and the three conditions below are satisfied:
\begin{enumerate}
\item the set~$E_0$ is reduced to one element~$*$,
\item the set~$E_1$ is reduced to one element~$1$,
\item the function~$t_2$ is the constant function equal to~$1$.
\end{enumerate}
In the following, we drop those three conditions in the general definition of
3-dimensional rewriting systems. Dropping condition (3) allows us to consider
terms with multiple outputs (whereas terms in term-rewriting systems always have
exactly one output). In particular, we will see in Section~\ref{sec:pres-law}
that this enables us to recover the usual setting of (non-linear) term rewriting
systems: non-linearity is obtained by adding to the signature symbols to
explicitly duplicate, erase and swap variables. Dropping condition (2) allows us
to consider typed terms. For example, a symbol~$f\in E_2$ with~$s_1(f)=A\otimes
B$ and~$t_1(f)=C\otimes D$, where $A,B,C,D\in E_1$ means that~$f$ has two inputs
of respective types~$A$ and~$B$ and two outputs of respective types~$C$ and~$D$;
diagrammatically,
\[
\vxym{
&\ar@/^/[dr]^B\\
\ar@/^/[ur]^A\ar@/_/[dr]_C&f\Downarrow&\\
&\ar@/_/[ur]_D\\
}
\qquad\qquad\qquad\qquad\qquad
\strid{f_abcd}
\]
Finally, dropping condition (1) allows to have typed types, as explained in
Section~\ref{sec:rs-2}.

Notice that usually, term rewriting systems are considered to generalize string
rewriting system because a letter can be seen as a symbol of arity~$1$. While
this is true, this is not the right point of view! At the light of the previous
explanations, we see that term rewriting systems do generalize string rewriting
systems in the sense that they rewrite rewriting paths in string rewriting systems.

\subsection{Polygraphs}
\label{sec:polygraphs}
We believe that the general pattern should have become clear now, and one can
generalize the situation in order to inductively define a notion of
$n$-dimensional rewriting system. This was formalized by Burroni under the name
of~\emph{$n$-polygraph}~\cite{burroni:higher-word}. The article introducing
polygraphs was in fact rediscovering the notion of \emph{$n$-computad} invented
17 years earlier by Street in its 2\nbd{}dimensional
version~\cite{street:limit-indexed-by-functors} and later on generalized to
higher dimensions by Power~\cite{power:n-cat-pasting}. We insist here on the
contribution of Burroni, who first found and formalized connections between
these general constructions and rewriting theory.

The notion of $n$\nbd{}polygraph is defined inductively (on~$n\in\N$) as
follows. A $0$-polygraph is a set~$E_0$. Now suppose given an $n$-polygraph, \ie
a diagram
\begin{equation}
  \label{eq:n-pol}
  \vxym{
    E_0\ar[d]_{i_0}&E_1\ar[d]_{i_1}\ar@<-0.7ex>[dl]_{s_0}\ar@<0.7ex>[dl]^{t_0}&E_2\ar[d]_{i_2}\ar@<-0.7ex>[dl]_{s_1}\ar@<0.7ex>[dl]^{t_1}&\ldots&E_{n-2}\ar[d]_{i_{n-2}}&E_{n-1}\ar[d]_{i_{n-1}}\ar@<-0.7ex>[dl]_{s_{n-2}}\ar@<0.7ex>[dl]^{t_{n-2}}&E_n\ar@<-0.7ex>[dl]_{s_{n-1}}\ar@<0.7ex>[dl]^{t_{n-1}}\\
    E_0^*&\ar@<-0.7ex>[l]_{\fpoly{s_0}}\ar@<0.7ex>[l]^{\fpoly{t_0}}\fpoly{E_1}&\ar@<-0.7ex>[l]_{\fpoly{s_1}}\ar@<0.7ex>[l]^{\fpoly{t_1}}\fpoly{E_2}&\ldots&E_{n-2}^*&E_{n-1}^*\ar@<-0.7ex>[l]_{\fpoly{s_{n-2}}}\ar@<0.7ex>[l]^{\fpoly{t_{n-2}}}\\
  }
\end{equation}
in~$\Set$ together with a structure of $(n-1)$-category on the $(n-1)$-graph
\[
\vxym{
  E_0^*&\ar@<-0.7ex>[l]_{\fpoly{s_0}}\ar@<0.7ex>[l]^{\fpoly{t_0}}\fpoly{E_1}&\ar@<-0.7ex>[l]_{\fpoly{s_1}}\ar@<0.7ex>[l]^{\fpoly{t_1}}\fpoly{E_2}&\ldots&E_{n-2}^*&E_{n-1}^*\ar@<-0.7ex>[l]_{\fpoly{s_{n-2}}}\ar@<0.7ex>[l]^{\fpoly{t_{n-2}}}
}
\]
Such a polygraph generates a free $n$-category, which will be noted~$E^*$, with
the previous $(n-1)$-category as underlying $(n-1)$-category and containing the
elements of~$E_n$ as $n$-cells, with source and target indicated by the
functions~$s_{n-1}$ and~$t_{n-1}$. We write~$E_n^*$ for its set of $n$-cells,
$i_n:E_n\to E_n^*$ for the canonical injection and $s_{n-1}^*,t_{n-1}^*:E_n^*\to
E_{n-1}^*$ for the source and target maps. An $(n+1)$-polygraph is then defined
as a diagram of the form
\[
\vxym{
  E_0\ar[d]_{i_0}&E_1\ar[d]_{i_1}\ar@<-0.7ex>[dl]_{s_0}\ar@<0.7ex>[dl]^{t_0}&E_2\ar[d]_{i_2}\ar@<-0.7ex>[dl]_{s_1}\ar@<0.7ex>[dl]^{t_1}&\ldots&E_{n-2}\ar[d]_{i_{n-2}}&E_{n-1}\ar[d]_{i_{n-1}}\ar@<-0.7ex>[dl]_{s_{n-2}}\ar@<0.7ex>[dl]^{t_{n-2}}&E_n\ar@<-0.7ex>[dl]_{s_{n-1}}\ar@<0.7ex>[dl]^{t_{n-1}}\ar[d]_{i_n}&E_{n+1}\ar@<-0.7ex>[dl]_{s_{n}}\ar@<0.7ex>[dl]^{t_{n}}\\
  E_0^*&\ar@<-0.7ex>[l]_{\fpoly{s_0}}\ar@<0.7ex>[l]^{\fpoly{t_0}}\fpoly{E_1}&\ar@<-0.7ex>[l]_{\fpoly{s_1}}\ar@<0.7ex>[l]^{\fpoly{t_1}}\fpoly{E_2}&\ldots&E_{n-2}^*&E_{n-1}^*\ar@<-0.7ex>[l]_{\fpoly{s_{n-2}}}\ar@<0.7ex>[l]^{\fpoly{t_{n-2}}}&E_n^*\ar@<-0.7ex>[l]_{\fpoly{s_{n-1}}}\ar@<0.7ex>[l]^{\fpoly{t_{n-1}}}\\
}
\]
in~$\Set$, together with the structure of $n$-category on the $n$-graph at the
bottom of the diagram, such that
\[
s_{n-1}^*\circ s_n=s_{n-1}^*\circ t_n
\qqtand
t_{n-1}^*\circ s_n=t_{n-1}^*\circ t_n
\]
We refer the reader to~\cite{burroni:higher-word} for a detailed definition and
to~\cite{guiraud-malbos:higher-cat-fdt} for a more abstract equivalent
definition.

Given an $n$-polygraph~\eqref{eq:n-pol}, the elements of~$E_k$ are called
\emph{$k$-generators}. Notice that~$E_0^*$ is the free 0-category (\ie set)
on~$E_0$, so~$E_0^*$ is isomorphic to~$E_0$, and we might as well decide that
they are equal: we have implicitly done this assumption in our presentation of
$n$\nbd{}dimensional rewriting systems in order to ease the explanation. Given
two $n$\nbd{}polygraphs~$E$ and~$F$, a morphism between them consists of a
family \hbox{$(f_k:E_k\to F_k)_{0\leq k\leq n}$} of functions such that
$s_k\circ f_{k+1}=f_k^*\circ s_k$ for every index~$k$, and we write~$\nPol{n}$
for the category of $n$\nbd{}polygraphs. Every $m$-polygraph~$E$ admits an
underlying $n$\nbd{}polygraph (with~$m>n$), often written~$\polytrunc{E}{n}$,
thus inducing a forgetful functor~$\nPol{m}\to\nPol{n}$.

\section{Presentations of $n$-categories}
\label{sec:pres-n-cat}
A crucial application of polygraphs relies on the fact that they provide us with
a notion of \emph{presentation} of $n$-categories: an $(n+1)$\nbd{}polygraph~$E$
can be seen as a description of an $n$-category by generators (the elements of
$E_k$ with $0\leq k\leq n$ being generators for $k$\nbd{}cells) quotiented by
relations (the elements of~$E_{n+1}$). More precisely, an $(n+1)$-polygraph~$E$
generates an $(n+1)$\nbd{}category~$E^*$. The $n$-category \emph{presented} by
the polygraph~$E$, written~$\overline{E^*}$ is obtained by quotienting the
underlying $n$-category of~$E^*$ by the relation identifying two
$n$\nbd{}cells~$f$ and~$g$ whenever there exists an $(n+1)$\nbd{}cell
$\alpha:f\to g$ in~$E^*$. More generally, one says that an $n$-category~$\C$ is
\emph{presented} by an $(n+1)$\nbd{}polygraph~$E$ when~$\C$ is isomorphic
to~$\overline{E^*}$.

Presentations of categories are useful in the sense that they can provide us
with small (even sometimes finite) descriptions of categories. This idea of
presentation generalizes the well-known notion of presentation of a monoid (or a
group). We begin by recalling this simple and well studied setting.

\subsection{Presentations of categories}
\subsubsection{Presentations of monoids}
\label{sec:mon-pres}
One of the main applications of string rewriting systems, that we are going to
generalize here, is the construction of presentations of a monoid.

\begin{defi}
  A \emph{presentation}~$\pres\Sigma R$ of a monoid~$M$ consists of
  \begin{itemize}
  \item a set~$\Sigma$ of \emph{generators},
  \item a set~$R\subseteq \Sigma^*\times\Sigma^*$ of \emph{relations}
  \end{itemize}
  where~$\Sigma^*$ denotes the free monoid over~$\Sigma$, such that the
  monoid~$M$ is isomorphic to~$\Sigma^*/\equiv_R$, where~$\equiv_R$ is the
  smallest congruence (\wrt concatenation of words) on~$\Sigma^*$
  containing~$R$.
\end{defi}

\begin{exa}
  \label{ex:pres-mon}
  The additive monoid~$\N$ admits the presentation with one generator~$a$ and no
  relation~$\pres a{}$. Similarly,
  \[
  \N/2\N
  \cong
  \pres a{aa=1}
  \qquad\qquad
  \N\times\N
  \cong
  \pres{a,b}{ba=ab}
  \]
  and, if we write~$\mathfrak{S}_n$ for the monoid of symmetries on $n$
  elements,
  \[
  \mathfrak{S}_n\cong\pres{\sigma_1,\ldots,\sigma_n}{\sigma_i\sigma_{i+1}\sigma_i=\sigma_{i+1}\sigma_i\sigma_{i+1},\
    \sigma_i^2=1,\ \sigma_i\sigma_j=\sigma_j\sigma_i}
  \]
\end{exa}


How does one construct a presentation~$\pres\Sigma R$ of a given monoid~$M$? The
main difficulty consists in showing the isomorphism
$M\cong(\Sigma^*/\equiv_R)$ which is a priori difficult because the second
monoid is described as terms modulo a congruence. One of the great usefulness of
string rewriting systems is that, when they are convergent (\ie terminating and
confluent), they provide one with a notion of normal form of terms modulo the
congruence generated by the rewriting rules. In order to show the isomorphism
$M\cong(\Sigma^*/\equiv_R)$, one can thus try to follow the following recipe:
\begin{enumerate}
\item Orient the relations in~$R$ in order to get a string rewriting system on
  the alphabet~$\Sigma$.
\item Show that the rewriting system is terminating.
\item Compute the critical pairs of the rewriting system and show that they are
  joinable.
\item The two previous points ensure that the rewriting system is
  convergent. Compute the normal forms and show that they are in bijection with
  the elements of~$M$ in a way compatible with multiplication and unit.
\end{enumerate}
It is well known that the joinability of critical pairs checked in (3) implies
the local confluence of the rewriting system. Its confluence can then be deduced
by Newman's lemma provided that it is terminating (2). The compatibility
condition of (4) can be made explicit as follows. We write~$(M,\times,1)$ for
the operations of the monoid~$M$, we write~$\overline{w}$ for the normal form of
a word~$w\in\Sigma^*$, and
set~$\overline{\Sigma^*}=\setof{\overline{w}}{w\in\Sigma^*}$. One should provide
a pair of functions~$f:\overline{\Sigma^*}\to M$
and~$g:M\to\overline{\Sigma^*}$, which are mutually inverse morphisms of
monoids:
\[
f(\overline{\overline u\otimes\overline v})=f(\overline u)\times f(\overline v)
\qquad
f(\overline\varepsilon)=1
\qquad
g(m\times n)=g(m)\otimes g(n)
\qquad
g(1)=\overline\varepsilon
\]
for every~$u,v\in\Sigma^*$ and $m,n\in M$, where~$\otimes$ denotes concatenation
and~$\varepsilon$ the empty word.

\begin{exa}
  \label{ex:mon-pres}
  We can show that the additive monoid~$M=\N\times(\N/2\N)$ admits the
  presentation~$\pres{a,b}{ba=ab, 1=bb}$, where $1$ denotes the empty word:
  \begin{enumerate}
  \item We orient the rules as follows:
    \[
    ba\To ab
    \qqtand
    bb\To 1
    \]
  \item The system is terminating: introduce a suitable measure on words based
    on the fact that the rules respectively decrease the number of $a$ after a
    $b$, and the total number of $b$ in a word.
  \item The two critical pairs are joinable:
    \[
    \vxym{
      &\ar@{=>}[dl]bba\ar@{=>}[dr]&\\
      a\ar@{=}[dr]&&bab\ar@{=>}[d]\\
      &a&\ar@{=>}[l]abb\\
    }
    \qquad\qquad\qquad\qquad
    \vxym{
      &\ar@{=>}[dl]bbb\ar@{=>}[dr]&\\
      b\ar@{=}[dr]&&\ar@{=}[dl]b\\
      &b&\\
    }
    \]
  \item The normal forms are words of the form $a^nb^m$ with~$n\in\N$
    and~$m\in\{0,1\}$. These are obviously in bijection with elements of
    $\N\times(\N/2\N)$ and the bijection can be shown to be a morphism of
    monoids (see below).
  \end{enumerate}
\end{exa}

Since normal forms are canonical representatives of elements of the free
monoid~$\Sigma^*$ over the alphabet~$\Sigma$ modulo the congruence~$\equiv_R$,
in order to define the morphism~$f$ witnessing the bijection (4), by universal
property of the free monoid it is enough to define~$f$ on letters in~$\Sigma$,
extend it as a morphism~$f:\Sigma^*\to M$ and show that the images of two words
equivalent \wrt $\equiv_R$ are equal, which can be checked by showing that the
image by~$f$ of any left member of any rule is equal the corresponding right
member. Of course, the explicit construction of~$g$ is not necessary and it is
equivalent (and sometimes more natural) to show that~$f$ is both surjective and
injective.

\begin{exa}
  The last step of Example~\ref{ex:mon-pres} can thus be shown as follows:
  \begin{itemize}
  \item[(4)] We define a morphism~$f:\overline{\Sigma^*}\to M$ as follows. The
    morphism~$f$ is defined on letters by~$f(a)=(1,0)$, $f(b)=(0,1)$. The
    induced morphism~$f:\Sigma^*\to M$ is compatible with the
    congruence~$\equiv_R$:
    \[
    f(ba)=f(b)+f(a)=(0,1)+(1,0)=(1,1)=(1,0)+(0,1)=f(a)+f(b)=f(ab)
    \]
    and
    \[
    f(bb)=f(b)+f(b)=(0,1)+(0,1)=(0,0)=f(1)
    \]
    Conversely, $g$ is defined by $g((m,n))=a^mb^n$, which is a morphism of monoids,
    \ie
    \[
    g((m,n)+(m',n'))=a^{m+m'}b^{n+n'}=\overline{a^mb^na^{m'}b^{n'}}
    \]
    and the two morphisms are mutually inverse
    \[
    g\circ f(a^mb^n)=g((m,n))=a^mb^n
    \qtand
    f\circ g((m,n))=f(a^mb^n)=(m,n)
    \]
  \end{itemize}
\end{exa}

This general methodology can be straightforwardly adapted to $n$-polygraphs in
order to build presentations of $n$-categories. We detail a few interesting
particular cases in the following.

\begin{rem}
  Notice that the method described above is only a ``recipe'' for constructing
  presentations, which works (or can be adapted) in many cases. In fact, it has
  been shown by Squier~\cite{squier:word-problems,
    lafont-proute:homology-monoids}, using homological methods, that there
  exists monoids which admit finite presentations, but no finite presentation
  which can be oriented into a convergent rewriting system. These theoretical
  considerations aside, convergent presentations are sometimes difficult to
  build: in these cases, the following weaker method (used for example
  in~\cite{lafont:boolean-circuits}) can be tried:
  \begin{enumerate}
    \setcounter{enumi}{1}
  \item Introduce a notion of canonical form for words (which will play the role
    of previous normal forms).
  \item Show that every word reduces to a canonical form.
  \item Define a morphism~$f:\Sigma^*\to M$ which is compatible with rewriting
    rules as previously, and show that it induces a bijection between the
    canonical forms and the elements of~$M$.
  \end{enumerate}
  This variant has the advantage of not requiring to show termination, but it
  can lead one to have to consider many cases (many more than critical pairs when
  the system is convergent).
\end{rem}

\subsubsection{Presentations of categories}
2-polygraphs more generally present categories: the case of a presentation of a
monoid is the particular case where the presented category has only one
object. Namely, a 2-polygraph~$E$ describes a category which is the smallest
category with the 0\nbd{}generators in~$E_0$ as objects, containing the typed
1-generators in~$E_1$ as morphisms, quotiented by the relations in~$E_2$.

\begin{exa}
  \label{ex:pres-delta-cat}
  Consider the simplicial category~$\Delta$: its objects are natural
  integers~$n\in\N$, where~$n$ is seen as the totally ordered set with~$n$
  elements~$\{0,1,2,\ldots,n-1\}$ and morphisms~$f:m\to n$ are increasing
  functions. This category is presented by the 2\nbd{}polygraph~$E$ with
  $E_0=\N$, the set~$E_1$ contains the two families of 1-generators indexed by
  integers~\hbox{$n,i\in\N$}, such that~$0\leq i\leq n$,
  \[
  \mu_i^{n+1}:n+2\to n+1
  \qqtand
  \eta_i^n:n\to n+1
  \]
  and the set~$E_2$ of rewriting rules contains the families of 2-generators
  index by $n,i\in\N$
  \[
  \begin{array}{rcl}
    \mu^{n+1}_j\mu^{n+2}_i&\To&\mu^{n+1}_i\mu^{n+2}_{j+1}\hspace{1.5ex}\text{for $i\leq j$,}\\
    \eta^{n+1}_i\eta^n_j&\To&\eta^{n+1}_{j+1}\eta^n_i\hspace{4.2ex}\text{for $i\leq j$,}\\
    \mu_j^{n+1}\eta_i^{n+1}&\To&
    \begin{cases}
      \eta_i^n\mu_{j-1}^n&\text{for $i<j$,}\\
      \id_{n+1}&\text{for $i=j$ or $i=j+1$,}\\
      \eta^n_{i-1}\mu^n_j&\text{for $i>j+1$}\\
    \end{cases}
  \end{array}
  \]
  In order to show this, the same recipe as previously can be used: the
  rewriting system can be shown to be terminating and confluent, the normal
  forms being terms of the form
  \[
  \eta_{i_k}^{m-h+k-1}\circ\ldots\circ\eta_{i_1}^{m-h}\circ\mu_{j_h}^{m-h}\circ\ldots\circ\mu_{j_1}^{m-1}
  \qcolon
  m\to n
  \]
  such that $n=m-h+k$, $n>i_k>\ldots>i_1\geq 0$ and $0\leq j_h<\ldots<j_1<m$. We
  can then construct a functor~$F:\overline{E^*}\to\Delta$, which is defined as
  the identity on objects and is defined on 1-generators by
  \[
  F(\mu_i^n)=k\mapsto
  \begin{cases}
    k&\text{if $k\leq i$}\\
    k-1&\text{if $k>i$}\\
  \end{cases}
  \qtand
  F(\eta_i^n)=k\mapsto
  \begin{cases}
    k&\text{if $k<i$}\\
    k+1&\text{if $k\geq i$}\\
  \end{cases}
  \]
  The image of the left member of a rule by~$F$ can be checked to be equal to
  the image of the corresponding right member and it can be shown to be a
  bijection, see~\cite{maclane:cwm} (section VII.5) for details. A refined way
  to construct this presentation is hinted in Example~\ref{ex:simpl-2-pres}.
\end{exa}

\subsection{Presentations of 2-categories}
\subsubsection{Monoidal categories}
Before explaining how term rewriting systems provide us with a notion of
presentation of a Lawvere theory, we need to introduce the notion of monoidal
category. Formally, a monoidal category could be defined as a 2-category
(Definition~\ref{def:2-cat}) with exactly one 0-cell. However, in the same way
that a category with only one object can be reformulated as a monoid (\ie a set
with operations), a monoidal category is generally defined as a category with
operations.

\begin{defi}
  \label{def:mon-cat}
  A \emph{strict monoidal category}~$(\C,\otimes,I)$ consists of a category~$\C$
  together with
  \begin{itemize}
  \item a functor~$\otimes:\C\times\C\to\C$, called \emph{tensor product},
  \item an object~$I$ of~$\C$, called \emph{unit},
  \end{itemize}
  such that
  \begin{itemize}
  \item tensor product is associative: for every objects~$A$,$B$ and $C$,
    $(A\otimes B)\otimes C=A\otimes(B\otimes C)$
  \item units are neutral elements: for every object~$A$, $I\otimes A=A=A\otimes
    I$.
  \end{itemize}
\end{defi}

The tensor product~$f\otimes g$ of two morphisms~$f$ and~$g$ should be thought
as the morphism~$f$ ``in parallel'' with the morphism~$g$. This definition is a
particular case of a more general definition of monoidal
categories~\cite{maclane:cwm} (which is why it is called \emph{strict}), however
we will only consider this variant here. A monoidal category $(\C,\otimes,I)$ as
above gives rise to a 2-category, with only one 0-cell, the 1\nbd{}cells being
the objects of~$\C$, the 2-cells being the morphisms of~$\C$, vertical
composition being given by the composition of~$\C$ and horizontal composition
being given by~$\otimes$; and this induces an isomorphism between the
category of monoidal categories and the categories of 2-categories with one
object (this is a generalization of the situation between monoids and categories
described in Section~\ref{sec:2-rs}). In other words, a monoidal category is
``the same as'' a 2-category with one object and we can in particular reuse the
string-diagrammatic notation for morphisms. We thus sometimes implicitly
consider a monoidal category as a 2-category in the following.

Any cartesian category~$\C$ can be equipped with a structure of monoidal
category~$(\C,\times,1)$, tensor product being cartesian product and unit being
the terminal object -- for simplicity, we consider that cartesian products are
strictly associative, which is always true up to equivalence of categories, but
this assumption could have been dropped if we had worked with the more general
notion of (non-strict) monoidal category. The usual notion of monoid can be
generalized from the monoidal category $(\Set,\times,1)$ to any monoidal
category as follows.

\begin{defi}
  \label{def:monoid}
  A \emph{monoid} $(M,\mu,\eta)$ in a monoidal category~$(\C,\otimes,I)$
  consists of an object~$M$ equipped with two morphisms
  \[
  \mu:M\otimes M\to M
  \qtand
  \eta:I\to M
  \]
  such that the diagrams
  \[
  \vxym{
    M\otimes M\otimes M\ar[d]_{\id_M\otimes M}\ar[r]^-{\mu\otimes\id_M}&M\otimes M\ar[d]^\mu\\
    M\otimes M\ar[r]_\mu&M
  }
  \qtand
  \vxym{
    I\otimes M\ar[dr]_{\id_M}\ar[r]^{\eta\otimes\id_M}&M\otimes M\ar[d]_\mu&\ar[l]_{\id_M\otimes\eta}\ar[dl]^{\id_M}M\otimes I\\
    &M&\\
  }
  \]
  commute. String-diagrammatically,
  \[
  \strid{mon_assoc_l}
  =
  \strid{mon_assoc_r}
  \qtand
  \strid{mon_neutral_l}
  =
  \strid{mon_neutral_c}
  =
  \strid{mon_neutral_r}
  \]
\end{defi}

\begin{exa}
  A monoid in~$(\Set,\times,1)$ is a monoid in the usual sense, a monoid
  in~$(\Cat,\times,1)$ is a monoidal category.
\end{exa}

As mentioned before, the tensor product intuitively expresses the possibility of
having two morphisms done in parallel. One sometimes also needs to be able to
``swap'' two morphisms, which is formalized by the notion of symmetric monoidal
category.

\begin{defi}
  A \emph{symmetric monoidal category} $(\C,\otimes,I,\gamma)$ consists of a
  monoidal category~$(\C,\otimes,I)$ together with a family~$\gamma$ of
  isomorphisms~$\gamma_{A,B}:A\otimes B\to B\otimes A$, indexed by pairs of
  objects~$A,B$ of~$\C$ and called \emph{symmetry}, which is natural in the
  sense that for every morphisms~$f:A\to A'$ and~$g:B\to B'$,
  \[
  \vxym{
    A\otimes B\ar[d]_{\gamma_{A,B}}\ar[r]^{f\otimes g}&A'\otimes B'\ar[d]^{\gamma_{A',B'}}\\
    B\otimes A\ar[r]_{g\otimes f}&B'\otimes A'
  }
  \]
  and such that for every objects~$A$, $B$ and $C$ the diagrams
  \[
  \xymatrix@C-6 pt{%
    A\otimes B\otimes C\ar[rr]^{\gamma_{A,B\otimes C}}\ar[dr]_{\gamma_{A,B}\otimes\id_C}&&B\otimes C\otimes A\\
    &B\otimes A\otimes C\ar[ur]_{\id_B\otimes\gamma_{A,C}}&
  }
  \quad
  \xymatrix@C-6 pt{%
    A\otimes B\otimes C\ar[dr]_{\id_A\otimes\gamma_{B,C}}\ar[rr]^{\gamma_{A\otimes B,C}}&&C\otimes A\otimes B\\
    &A\otimes C\otimes B\ar[ur]_{\gamma_{A,C}\otimes \id_B}&
  }
  \]
  commute, and for every objects~$A$ and~$B$,
  \[
  \gamma_{B,A}\circ\gamma_{A,B}\qeq\id_{A\otimes B}
  \]
\end{defi}

Again, every cartesian category can be equipped with a symmetry using the
canonical isomorphisms~$A\otimes B\cong B\otimes A$ induced by the cartesian product. The
notion of commutative monoid can thus be generalized to any symmetric monoidal
category as follows.

\begin{defi}
  A \emph{commutative monoid} $(M,\mu,\eta)$ in a symmetric monoidal
  category $(\C,\otimes,I,\gamma)$ is a monoid such that
  \[
  \mu\circ\gamma_{M,M}\qeq\mu
  \]
  String-diagrammatically,
  \[
  \strid{cmon_l}
  \qeq
  \strid{cmon_r}
  \]
\end{defi}

The notion of \emph{commutative comonoid}~$(M,\delta,\varepsilon)$
with~$\delta:M\to M\otimes M$ and~$\varepsilon:M\to I$ is defined
dually. Interestingly, this notion allows us to characterize cartesian
categories among monoidal categories:

\begin{prop}
  \label{prop:cart-comon}
  A monoidal category~$(\C,\otimes,I)$ is cartesian, with~$\otimes$ as cartesian
  product and~$I$ as terminal object, if and only if the category can be
  equipped with a symmetry~$\gamma$ and every object~$A$ can be equipped with a
  structure of commutative comonoid~$(A,\delta_A,\varepsilon_A)$ which is
  natural in the sense that for every morphism~$f:A\to B$ the diagrams
  \[
  \vxym{
    A\ar[d]_{\delta_A}\ar[r]^f&B\ar[d]^{\delta_B}\\
    A\otimes A\ar[r]_{f\otimes f}&B\otimes B
  }
  \qtand
  \vxym{
    A\ar[dr]_{\varepsilon_A}\ar[rr]^f&&B\ar[dl]^{\varepsilon_B}\\
    &I
  }
  \]
  commute; string-diagrammatically,
  \[
  \strid{nat_delta_l}
  =
  \strid{nat_delta_r}
  \qquad\qquad
  \strid{nat_eps_l}
  =
  \strid{nat_eps_r}
  \]
\end{prop}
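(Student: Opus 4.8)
This is essentially (a strict-monoidal version of) Fox's theorem, and I would prove the two implications separately. For the ``only if'' direction, suppose $\C$ is cartesian with $\otimes$ the product and $I$ the terminal object; I take $\gamma_{A,B}=\langle\pi_2,\pi_1\rangle$, $\delta_A=\langle\id_A,\id_A\rangle$ the diagonal, and $\varepsilon_A$ the unique morphism $A\to I$. Every equation to be verified -- the coherence axioms for $\gamma$, the comonoid axioms, cocommutativity $\gamma_{A,A}\circ\delta_A=\delta_A$, and the two naturality squares -- then holds because both sides are morphisms into a product (resp.\ into $I$) that become equal after postcomposition with the projections (resp.\ by terminality). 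This direction is routine.

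For the ``if'' direction, assume given a symmetry $\gamma$ and a family of commutative comonoids $(A,\delta_A,\varepsilon_A)$ natural in $A$. The first task is to show $I$ is terminal. A morphism $A\to I$ is provided by $\varepsilon_A$; for uniqueness, naturality of $\varepsilon$ at an arbitrary $f\colon A\to I$ gives $\varepsilon_I\circ f=\varepsilon_A$, so it is enough to prove $\varepsilon_I=\id_I$. Since the monoidal structure is strict ($I\otimes I=I$ and $h\otimes\id_I=h=\id_I\otimes h$ for $h\colon I\to I$), the counit law on $I$ becomes $\varepsilon_I\circ\delta_I=\id_I$, so $\varepsilon_I$ is a split epimorphism; and naturality of $\delta$ at $\varepsilon_I$ together with the counit law yields $\delta_I\circ\varepsilon_I=\varepsilon_I$, whence $\delta_I=\id_I$ and then $\varepsilon_I=\id_I$. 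With $I$ terminal we obtain canonical projections $\pi_A=\id_A\otimes\varepsilon_B\colon A\otimes B\to A$ and $\pi_B=\varepsilon_A\otimes\id_B\colon A\otimes B\to B$.

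Next I would show that $A\otimes B$ with these projections is the product of $A$ and $B$, the pairing of $f\colon C\to A$ and $g\colon C\to B$ being $\langle f,g\rangle=(f\otimes g)\circ\delta_C$. One checks $\pi_A\circ\langle f,g\rangle=f$ by expanding, using naturality of $\varepsilon$ at $g$ to rewrite $\varepsilon_B\circ g$ as $\varepsilon_C$ and then the counit law, and symmetrically $\pi_B\circ\langle f,g\rangle=g$. For uniqueness, let $h\colon C\to A\otimes B$ satisfy $\pi_A\circ h=f$ and $\pi_B\circ h=g$; naturality of $\delta$ at $h$ gives $(h\otimes h)\circ\delta_C=\delta_{A\otimes B}\circ h$, so $\langle f,g\rangle=(\pi_A\otimes\pi_B)\circ(h\otimes h)\circ\delta_C=(\pi_A\otimes\pi_B)\circ\delta_{A\otimes B}\circ h$, and the whole matter reduces to the single identity $(\pi_A\otimes\pi_B)\circ\delta_{A\otimes B}=\id_{A\otimes B}$.

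This identity is the crux, and I expect it to be the main obstacle. It follows once one knows the comonoid structure is compatible with $\otimes$, namely $\varepsilon_{A\otimes B}=\varepsilon_A\otimes\varepsilon_B$ and $\delta_{A\otimes B}=(\id_A\otimes\gamma_{A,B}\otimes\id_B)\circ(\delta_A\otimes\delta_B)$: granting these, one rewrites $(\pi_A\otimes\pi_B)\circ\delta_{A\otimes B}$ using naturality of $\gamma$ to slide the two $\varepsilon$'s past $\gamma_{A,B}$ (with $\gamma_{I,I}=\id_I$) and then the counit laws, arriving at $\id_A\otimes\id_B$. The equality $\varepsilon_{A\otimes B}=\varepsilon_A\otimes\varepsilon_B$ drops out of naturality of $\varepsilon$ at $\pi_A$; the formula for $\delta_{A\otimes B}$ is the delicate point -- it is precisely the coherence content of Fox's theorem -- and I would either invoke that result or derive the formula by a careful diagram chase exploiting naturality of $\delta$ at the projections together with coassociativity and cocommutativity. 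Assembling these pieces then shows $\C$ is cartesian with $\otimes$ as product and $I$ as terminal object.
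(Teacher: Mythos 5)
The paper does not actually prove this proposition: it records it as folklore and refers the reader to Melliès's notes for a detailed proof, so there is no in-paper argument to compare yours against. Your skeleton is nonetheless the standard one, and most of it is sound: the cartesian-implies-comonoid direction is indeed routine; your derivation of the terminality of~$I$ from $\varepsilon_I\circ\delta_I=\id_I$, naturality, and strictness is valid; the verification that $\langle f,g\rangle=(f\otimes g)\circ\delta_C$ satisfies the two projection equations is correct; and you correctly reduce uniqueness of pairings to the single identity $(\pi_A\otimes\pi_B)\circ\delta_{A\otimes B}=\id_{A\otimes B}$.

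The gap is that this last identity --- which is where essentially all the content of the theorem is concentrated --- is not proved. Your first fallback, ``invoke Fox's theorem,'' is circular: the proposition under review \emph{is} (a strict form of) Fox's theorem. Your second fallback, deriving $\delta_{A\otimes B}=(\id_A\otimes\gamma_{A,B}\otimes\id_B)\circ(\delta_A\otimes\delta_B)$ from ``naturality of $\delta$ at the projections together with coassociativity and cocommutativity,'' does not obviously go through: naturality at $\pi_A$ and $\pi_B$ only yields $(\pi_A\otimes\pi_A)\circ\delta_{A\otimes B}=\delta_A\circ\pi_A$ and $(\pi_B\otimes\pi_B)\circ\delta_{A\otimes B}=\delta_B\circ\pi_B$, i.e.\ it controls the two ``diagonal'' components of $\delta_{A\otimes B}$ but says nothing about the cross component $(\pi_A\otimes\pi_B)\circ\delta_{A\otimes B}$ that you actually need. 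The computation one is tempted to substitute --- setting $q=(\pi_A\otimes\pi_B)\circ\delta_{A\otimes B}$ and checking $\pi_A\circ q=\pi_A$ and $\pi_B\circ q=\pi_B$ --- is itself circular, since concluding $q=\id_{A\otimes B}$ from these two equations is exactly the uniqueness statement being established. This compatibility of the comonoid structure with the tensor is precisely the point for which the paper defers to the literature; careful treatments either assume it outright (by requiring $\delta$ and $\varepsilon$ to be \emph{monoidal} natural transformations) or establish it by a genuinely separate argument, and your proposal contains neither. Until that step is supplied, the ``if'' direction is not proved.
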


This result was considered for a long time as folklore in category theory. A
detailed proof of this fact is not really difficult and can for example be found
in~\cite{mellies:cat-sem-ll}. The intuition behind this result is however very
enlightening: a cartesian category is a monoidal category in which any
object~$A$ can be duplicated (by~$\delta_A:A\to A\otimes A$), erased
(by~$\varepsilon_A:A\to I$) or swapped with another object~$B$
(by~$\gamma_{A,B}:A\otimes B\to B\otimes A$)! These operations can also be
thought as the analogous of contraction, weakening and exchange rules
respectively in sequent calculus. We use this result below in order to show that
in fact presentations of Lawvere theories are a particular case of presentations
of 2-categories.

\subsubsection{Presentations of Lawvere theories}
\label{sec:pres-law}
We have seen that string rewriting systems correspond to presentations of
monoids. Similarly, we may wonder: what are term rewriting systems presentations of? It turns
out that the right answer for this are particular categories, called Lawvere
theories, which were introduced by Lawvere in his thesis~\cite{lawvere:phd}.

\begin{defi}
  A \emph{Lawvere theory} is a cartesian category whose objects are
  integers~$\N$ such that~$0$ is a terminal object and product is given on
  objects by addition.
\end{defi}

\begin{exa}
  \label{ex:Nmat}
  A Lawvere theory~$\Nmat$ of $\N$-valued matrices can be defined as
  follows. Its objects are natural numbers and a morphism~$M:m\to n$ is a
  matrix~$M$ of size $n\times m$ (with $n$ rows and $m$ columns) with coefficients in~$\N$. The composition
  $N\circ M:m\to p$ of two morphisms~$M:m\to n$ and \hbox{$N:n\to p$} is given
  by the usual composition~$N\circ M=NM$ of matrices and identities are usual
  identity matrices. This category is cartesian with cartesian product defined
  on objects by addition and the cartesian product of two morphisms~$M:m\to m'$
  and \hbox{$N:n\to n'$} is their direct sum, \ie the matrix~$M\times N:m+n\to
  m'+n'$ defined as
  \[
  \left(
    \begin{matrix}
      M&0\\
      0&N
    \end{matrix}
  \right)
  \]
  using a block representation of the matrix.
\end{exa}

The link with term rewriting systems can be explained as follows. Suppose given
a signature~$(\Sigma,a)$. This signature induces a Lawvere theory~$\Law\Sigma$
whose morphisms~$t:m\to n$ are $n$-uples $(t_1,\ldots,t_n)$ of terms using
variables in~$X_m=\{x_1,\ldots,x_m\}$. The composition \hbox{$u\circ t:m\to p$}
of two morphisms~$t:m\to n$ and~$u:n\to p$ is the morphism~$v$ defined by
substitution as
\[
(u_1[t_1/x_1,\ldots,t_n/x_n]\ ,\ \ldots\ ,\ u_p[t_1/x_1,\ldots,t_n/x_n])
\]
and the identity on an object $n$ is $(x_1,\ldots,x_n)$. This category is
cartesian, in particular the structure of commutative comonoid $(1,\delta:1\to
2,\varepsilon:1\to 0)$ on the object~$1$ given by
Proposition~\ref{prop:cart-comon} is defined by~$\delta=(x_1,x_1)$,
$\varepsilon=()$ and the symmetry on~$1$ is~$\gamma_{1,1}=(x_2,x_1)$.

Suppose given a term rewriting system~$R$ on~$\Sigma$. We write~$\equiv_R$ for
the smallest congruence on terms containing the rules in~$R$. A term rewriting
system~$R$ \emph{presents} a Lawvere theory~$\C$ when~$\C$ is isomorphic to
$\Law\Sigma/\equiv_R$ (the Lawvere theory generated by the signature~$\Sigma$
whose morphisms are quotiented by the equivalence relation~$\equiv_R$).  Notice
that, in order to show the isomorphism, one can try to apply the method given in
Section~\ref{sec:mon-pres}.

\begin{exa}
  \label{ex:mon-trs}
  Consider a term rewriting system corresponding to commutative monoids: the
  signature contains two symbols~$m$ of arity $2$ and~$e$ of arity~$0$, and
  there are four rewriting rules
  \[
  m(m(x,y),z)\To m(x,m(y,z))
  \qquad
  m(e,x)\To x
  \qquad
  m(x,e)\To x
  \qquad
  m(x,y)\To m(y,x)
  \]
  This rewriting system presents the Lawvere theory~$\Nmat$ of $\N$-valued
  matrices described in Example~\ref{ex:Nmat}. In order to show this, one could
  try to show that the rewriting system is convergent, but this fails
  immediately: the rewriting system is not terminating because of the last rule
  for commutativity. It is however possible to overcome this difficulty by
  working on terms modulo commutativity, but we do not detail this here since a
  more satisfactory way of proving the result is mentioned in
  Example~\ref{ex:mon-poly}.
  In order to provide an intuition about the result we shall however describe
  the functor~$F:(\Law\Sigma/\!\equiv_R)\to\Nmat$ witnessing the isomorphism
  between the two categories, which can be shown to preserve cartesian
  products. The functor~$F$ is the identity on objects. Suppose given a morphism
  $t:m\to n$, constituted of an $n$\nbd{}uple of terms $(t_1,\ldots,t_n)$ with
  variables in~$X_m$. The image~$F(t)$ of the morphism~$t$ is the matrix~$M$ of
  size $n\times m$ such that the entry~$M_{i,j}$ is the number of occurrences of
  the variable~$x_j$ in the term~$t_i$. For example, the image of the morphism
  \[
  m(m(x_1,x_1),x_2),\ e,\ x_2
  \qcolon
  2 \to 3
  \]
  is the matrix
  \[
  \left(
    \begin{matrix}
      2&1\\
      0&0\\
      0&1
    \end{matrix}
  \right)
  \]
  The functor~$F$ can be shown to be compatible with the rewriting rules, \ie
  the images of two morphisms equivalent modulo~$\equiv_R$ are equal, for
  example the morphism
  \[
  m(x_1,m(x_2,x_1)),\ m(e,e),\ m(e,x_2)
  \qcolon
  2\to 3
  \]
  is equivalent to the morphism above and has the same image under~$F$. Finally,
  the functor~$F$ can be shown to be an isomorphism.
\end{exa}

The result above is quite interesting because it provides one with a concrete
description of the category generated by the rewriting system which describes
the notion of commutative monoid in a cartesian category: from the result in the
example above, it is easy to deduce the somewhat surprising fact that the
category~$\Nmat$ ``impersonates'' the notion of monoid in the sense that a
monoid in a cartesian category~$\C$ is ``the same as'' a functor $\Nmat\to\C$
which preserves cartesian products (more formally, the category of
product-preserving functors from~$\Nmat$ to~$\C$ and cartesian natural
transformations between them is equivalent to the category of monoids in~$\C$
and morphisms of monoids).

We have explained in Section~\ref{sec:trs} that a 3-polygraph~\eqref{eq:rs-3}
with~$E_0=\{*\}$ and \hbox{$E_1=\{1\}$} reduced to exactly one element and~$t_1$
indicating that all the 2\nbd{}generators have exactly one output reformulate
linear term rewriting systems. At the light of Proposition~\ref{prop:cart-comon},
if we allow 2-generators with multiple outputs, (non-linear) term rewriting
systems can be represented by adding 2-generators
\[
\delta:1\To 1\otimes 1
\qquad\qquad
\varepsilon:1\To 0
\qquad\qquad
\gamma:1\otimes 1\To 1\otimes 1
\]
which will allow explicit duplication, erasure and swapping of variables. This
was first formalized in~\cite{burroni:higher-word}:

\begin{prop}
  Suppose given a term rewriting system~$R$ on a signature~$(\Sigma,a)$. The
  Lawvere theory $\Law\Sigma/\equiv_R$ generated by the rewriting system (seen
  as a monoidal category and thus as a 2-category) is isomorphic to the
  2-category~$\overline{E^*}$ presented by the 3-polygraph~$E$ with
  \[
  E_0=\{*\}
  \qquad\qquad
  E_1=\{1\}
  \qquad\qquad
  E_2=\Sigma\uplus\{\delta,\varepsilon,\gamma\}
  \qquad\qquad
  E_3=R\uplus C
  \]
  with the expected source and target maps, where the rules in~$C$ express
  the fact that $(1,\delta,\varepsilon)$ has a natural structure of commutative
  comonoid with~$\gamma$ generating the symmetry.
\end{prop}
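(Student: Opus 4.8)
The plan is to exhibit an explicit isomorphism of 2-categories (equivalently, of monoidal categories with one object) between $\Law\Sigma/\!\equiv_R$ and $\overline{E^*}$, by showing that both are ``the same'' free-plus-relations construction. First I would make precise what the rules in $C$ are: three groups of 3-generators, one group forcing $(1,\delta,\varepsilon)$ to be a commutative comonoid (coassociativity, counit on both sides, cocommutativity), one group forcing $\gamma$ to be a symmetry in the sense of the definition preceding Proposition~\ref{prop:cart-comon} (naturality of $\gamma$ with respect to all 2-generators, the two hexagon-type coherences, and $\gamma\circ\gamma=\id$), and one group forcing naturality of $\delta$ and $\varepsilon$ with respect to every symbol $f\in\Sigma$. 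By Proposition~\ref{prop:cart-comon}, adding exactly these relations to the free 2-category on $E_2=\Sigma\uplus\{\delta,\varepsilon,\gamma\}$ turns $\overline{E^*}$ into a cartesian category whose objects are the natural numbers with addition as product and $0$ terminal; in other words $\overline{E^*}$ becomes a Lawvere theory, generated by the operations of $\Sigma$ subject only to the relations $R$.

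Next I would construct the comparison functor $F\colon\Law\Sigma/\!\equiv_R\to\overline{E^*}$. On objects it is the identity. On morphisms $t\colon m\to n$, given as an $n$-tuple $(t_1,\ldots,t_n)$ of terms in the variables $x_1,\ldots,x_m$, one builds a 2-cell of $\overline{E^*}$ by structural induction on the $t_i$: each function symbol $f\in\Sigma$ of arity $k$ is sent to the 2-generator $f\colon k\To 1$; a variable occurrence is a wire; sharing a variable $x_j$ among several subterms is implemented by $\delta$, discarding $x_j$ by $\varepsilon$, and reordering variables by composites of $\gamma$. This is precisely the translation hinted at by the string-diagram picture in Section~\ref{sec:trs}, and the comonoid/symmetry laws in $C$ are exactly what is needed to make $F$ well-defined, i.e.\ independent of the order in which one performs the duplications, erasures and permutations; the naturality relations in $C$ ensure $F$ respects substitution, hence composition; and the comonoid axioms ensure $F(\delta)$, $F(\varepsilon)$ realise the cartesian structure of $\overline{E^*}$ supplied by Proposition~\ref{prop:cart-comon}, so $F$ preserves finite products. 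Functoriality with respect to $\equiv_R$ is immediate since each rule of $R$ is a 3-generator of $E$.

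For the inverse, I would define $G\colon\overline{E^*}\to\Law\Sigma/\!\equiv_R$ on the 2-generators by $G(f)=\bigl(f(x_1,\ldots,x_k)\bigr)$ for $f\in\Sigma$ of arity $k$, $G(\delta)=(x_1,x_1)$, $G(\varepsilon)=()$, $G(\gamma)=(x_2,x_1)$, extend by horizontal and vertical composition, and check that the defining relations of $\overline{E^*}$ (the 2-category axioms and the 3-generators in $R\uplus C$) are all validated in the Lawvere theory: the axioms in $C$ hold because $\Law\Sigma$ is cartesian with the comonoid and symmetry on $1$ computed as in Section~\ref{sec:pres-law}, and the axioms in $R$ hold because we have quotiented by $\equiv_R$. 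Then $F$ and $G$ are mutually inverse: $G\circ F=\id$ is proved by induction on terms (a variable is sent back to itself, $f(\ldots)$ is reconstructed from the generator $f$ together with the bookkeeping comonoid maps, which $G$ turns back into the substitution/copy/discard operations), and $F\circ G=\id$ is checked on the four kinds of 2-generators, where it is a direct computation. The main obstacle — and the only genuinely delicate point — is the well-definedness of $F$: one must verify that two different ``wirings'' of the same tuple of terms, differing only in the bracketing and ordering of the copy, discard and swap operations, give equal 2-cells in $\overline{E^*}$. This is exactly a coherence statement for the free cartesian structure, and it is handled by Proposition~\ref{prop:cart-comon} together with the relations in $C$; making the induction clean is where most of the (routine) work lies, and it is the part one would spell out in full in the companion paper~\cite{mimram:2-cp} rather than here.
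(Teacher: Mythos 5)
The paper offers no proof of this proposition to compare against: it is attributed to Burroni~\cite{burroni:higher-word}, and the text following the statement only makes the data explicit (the maps $s_1$ and $t_1$, the rules in $C$, and the translation of the rules of $R$ into composites of $\Sigma$-generators with $\delta$, $\varepsilon$ and $\gamma$). Your argument is therefore a reconstruction of the standard proof, and it is sound in outline; two points deserve more care than your sketch gives them. First, to invoke Proposition~\ref{prop:cart-comon} you need a commutative comonoid structure on \emph{every} object $n$ of $\overline{E^*}$, natural with respect to \emph{every} 2-cell, whereas the rules in $C$ only provide the structure on the generating object $1$ and naturality with respect to the generating 2-cells; you must first extend $(\delta,\varepsilon,\gamma)$ to all $n$ by tensoring and interleaving with symmetries, and then propagate naturality to arbitrary 2-cells by induction on vertical and horizontal composites --- this is where the hexagon coherences and $\gamma\circ\gamma=\id_2$, which you rightly include in $C$ even though the paper's explicit list leaves them among the ``similar rules'' deferred to Burroni, are actually used. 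Second, once $\overline{E^*}$ is known to be cartesian and generated by $\Sigma$, the well-definedness of $F$ --- the point you single out as the only delicate one --- comes for free from the universal property of $\Law\Sigma$ as the free Lawvere theory on the signature: there is a unique product-preserving, identity-on-objects functor $\Law\Sigma\to\overline{E^*}$ sending each symbol to the corresponding 2-generator, and it factors through $\equiv_R$ because the rules of $R$ are 3-generators of $E$. This replaces your hand-rolled induction on wirings (and its coherence obligations) by a single appeal to freeness; the inverse $G$ on generators and the verification that the two composite functors are identities then go through exactly as you describe.
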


We shall make the source and target maps and the rules in~$C$ a bit more
explicit. If we write~$n$, as previously, for the tensor of $n$ copies of the
object~$1$, the map~$s_1:E_2\to E_1^*$ is defined by
\[
s_1(f)=a(f)
\qquad
s_1(\delta)=1
\qquad
s_1(\varepsilon)=1
\qquad
s_1(\gamma)=2
\]
for any symbol~$f\in\Sigma$ of arity $a(f)$, and similarly~$t_1$ is defined by
\[
t_1(f)=1\qquad
\qquad
t_1(\delta)=2
\qquad
t_1(\varepsilon)=0
\qquad
t_1(\gamma)=2
\]
The rules in~$C$ express that $(1,\delta,\varepsilon)$ should be a commutative
comonoid:
\[
(\delta\otimes\id_1)\circ\delta\TO (\id_1\otimes\delta)\circ\delta
\qquad
(\varepsilon\otimes\id_1)\circ\delta\TO\id_1
\qquad
(\id_1\otimes\varepsilon)\circ\delta\TO\id_1
\qquad
\gamma\circ\delta\TO\delta
\]
and the morphisms corresponding to generators should be natural \wrt every other
morphism corresponding to a generator, \eg for every element~$f:1\To 1$
of~$E_2$, $C$~should also contain the rules
\[
\delta\circ f\TO(f\otimes f)\circ\delta
\qquad
\varepsilon\circ f\TO\varepsilon
\qquad
\gamma\circ(f\otimes\id_1)\TO(\id_1\otimes f)\circ\gamma
\qquad
\gamma\circ(\id_1\otimes f)\TO(f\otimes\id_1)\circ\gamma
\]
(similar rules should also be introduced for 2-generators~$f$ whose arity or
coarity is different from~$1$,
see~\cite{burroni:higher-word}). Finally, the rules in~$R$ have source and
target the terms corresponding to their left and right member ``translated'' to
explicit manipulations of variables (by using the morphisms~$\delta$, $\varepsilon$
and~$\gamma$) followed by a linear term. For example, if~$\Sigma$ contains two
symbols~$f$ and~$g$ of arity~$2$, a rule $f(g(x_2,x_1),x_2)\TO g(x_1,x_1)$ will
be translated as
\[
f\circ(g\otimes\id_1)\circ(\gamma\otimes\id_1)\circ(\id_1\otimes\delta)\qTO
g\circ(\delta\otimes\varepsilon)
\]
which is maybe best understood with the corresponding string diagrammatic
notation
\[
\strid{ex_ltrs_l}
\qTO
\strid{ex_ltrs_r}
\]
(which is again very close to the usual notation of sharing graphs). This way of
representing variables in terms has been the starting point for numerous series
of works trying to formalize and axiomatize variable binding, \eg
\cite{fiore-plotkin-turi:as-vb}.

\begin{exa}
  \label{ex:mon-poly}
  The 3-polygraph corresponding to the term rewriting system of monoids
  introduced in Example~\ref{ex:mon-trs} is the theory of bicommutative
  bialgebras, and this polygraph can be shown directly to present the
  2-category~$\Nmat$ using rewriting techniques~\cite{lafont:boolean-circuits,
    mimram:phd, mimram:first-order-causality}. A particularly interesting point
  is that since swapping of variables is now explicit, the rule expressing
  commutativity of the monoid is now expressed as
  \[
  \strid{comm_l}
  \qTO
  \strid{comm_r}
  \]
  where~$\gamma$ and~$\mu$ are the 2-generators corresponding to symmetry and
  multiplication respectively: the rule is not anymore a priori an obstacle to
  termination (contrarily to the approach of Example~\ref{ex:mon-trs}) because
  it makes the number of~$\gamma$ generators decrease in the morphisms. However,
  no convergent rewriting system for bicommutative bialgebras is currently
  known.
\end{exa}

\subsubsection{Presentations of 2-categories}
3-polygraphs provide us with a general notion of presentation by 0-, 1- and
2-generators and relations (the 3-generators) of a 2-category. Many examples of
such presentations have been studied by
Lafont~\cite{lafont:boolean-circuits}. We briefly recall here some fundamental
examples that he discovered. The notion of critical pair of the corresponding
rewriting systems will be formally introduced in Section~\ref{sec:cp-3}.

\begin{exa}
  \label{ex:simpl-2-pres}
  In Example~\ref{ex:pres-delta-cat}, we have recalled the presentation of the
  simplicial category~$\Delta$. An unsatisfactory point about this presentation
  is that it is infinite (it has an infinite number of generators and of
  relations). Interestingly, one can provide a finite presentation of this
  category by considering it as a 2-category (more precisely as a monoidal
  category). Namely, a tensor product~$\otimes$ on the category can be defined
  on objects~$m$ and~$n$ by~$m\otimes n=m+n$ and on morphisms~$f:m\to m'$
  and~$g:n\to n'$ as the morphism $f\otimes g:m+n\to m'+n'$
  \[
  f\otimes g
  \qeq
  i\mapsto
  \begin{cases}
    f(i)&\text{if $i<m$}\\
    g(i-m)+m'&\text{if $i\geq m$}
  \end{cases}
  \]
  This induces a structure of monoidal category on~$\Delta$, with~$0$ as
  unit. This monoidal category admits a presentation by the 3-polygraph~$E$
  corresponding to the theory of monoids (see Definition~\ref{def:monoid}),
  with~$E_0=\{*\}$, $E_1=\{1\}$,
  \hbox{$E_2=\{\mu:2\to 1, \eta:0\to 1\}$} and the 3\nbd{}generators being
  \[
  a:
  \mu\circ(\mu\otimes\id_1)
  \TO
  \mu\circ(\id_1\otimes\mu)
  \qquad\qquad
  l:
  \mu\circ(\eta\otimes\id_1)
  \TO
  \id_1
  \qquad\qquad
  r:
  \mu\circ(\id_1\otimes\eta)
  \TO
  \id_1
  \]
  with the following string diagrammatic representation
  \[
  \strid{rs_mon_assoc_l}
  \overset{a}\TO
  \strid{rs_mon_assoc_r}
  \qquad\qquad
  \strid{rs_mon_unit_l}
  \overset{l}\TO
  \strid{rs_mon_unit_c}
  \qquad\qquad
  \strid{rs_mon_unit_r}
  \overset{r}\TO
  \strid{rs_mon_unit_c}
  \]
  This rewriting system is terminating and its five critical pairs
  \[
  \strid{rs_mon_cp1}
  \qquad
  \strid{rs_mon_cp2}
  \qquad
  \strid{rs_mon_cp3}
  \qquad
  \strid{rs_mon_cp4}
  \qquad
  \strid{rs_mon_cp5}
  \]
  are joinable (see Section~\ref{sec:contexts} for a precise definition of the
  critical pairs in a 3-polygraph). The normal forms are described by the
  following grammar
  \[
  \phi
  \qgramdef
  \strid{rs_mon_nf_eta}
  \qtor
  \strid{rs_mon_nf_id}
  \qtor
  \strid{rs_mon_nf_mu}
  \]
  One can define a functor~$F:\overline{E^*}\to\Delta$ as follows. The
  functor is the identity on objects. The image of~$\mu:2\to 1$ is the constant
  function~$m:2\to 1$ (which to~$0$ and~$1$ associates~$1$) and the image
  of~$\eta:0\to 1$ is the constant function~$e:0\to 1$. Finally, the functor~$F$
  can be shown to be an isomorphism (it is easy to show that there is a
  one-to-one correspondence between normal forms in~$E$ and functions
  in~$\Delta$). The fact that~$F$ is full means that every function can be
  expressed as a composite and tensor of~$m$ and~$e$. For example, the
  function~$f:4\to 3$ whose graph is pictured on the left (such that~$f(0)=1$,
  $f(1)=1$, $f(2)=1$ and~$f(3)=2$) is the image of the morphism on the right:
  \[
  \strid{ex_fun_i}
  \qqtext{is the image of}
  \strid{ex_fun}
  \]
  Of course there are many ways to express a given function~$f$ using~$m$
  and~$e$. The fact that the functor is faithful expresses the fact that if two
  morphisms obtained by composing and tensoring~$\mu$ and~$\eta$ have the same
  image by~$F$, then they are equivalent modulo the rewriting rules. Notice that
  the result of Example~\ref{ex:pres-delta-cat} can be recovered by defining
  $\mu_i^{n+1}=\id_i\otimes\mu\otimes\id_{n-i}$
  and~$\eta_i^n=\id_i\otimes\eta\otimes\id_{n-i}$.
\end{exa}

\begin{exa}
  \label{ex:pres-bij}
  The monoidal category~$\Bij$ is defined similarly as the simplicial
  category~$\Delta$ excepting that its morphisms are bijective functions. This
  monoidal category admits a presentation by the 3-polygraph~$E$ corresponding
  to the theory of symmetries, with \hbox{$E_0=\{*\}$}, $E_1=\{1\}$,
  \hbox{$E_2=\{\gamma:2\to 2\}$} and the 3\nbd{}generators being
  \[
  (\gamma\otimes\id_1)\circ(\id_1\otimes\gamma)\circ(\gamma\otimes\id_1)
  \TO
  (\id_1\otimes\gamma)\circ(\gamma\otimes\id_1)\circ(\id_1\otimes\gamma)
  \qquad\qquad
  \gamma\circ\gamma\TO\id_2
  \]
  whose string diagrammatic representation is
  \[
  \strid{yb_l}
  \TO
  \strid{yb_r}
  \qquad\qquad\qquad\qquad
  \strid{sym_l}
  \TO
  \strid{sym_r}
  \]
  The monoid of endomorphisms of an object~$n$ in the category~$\Bij$ is the
  symmetric group~$\mathfrak{S}_n$ (seen as a monoid). In this sense, the
  polygraph above provides a finite presentation of all the symmetric groups at
  once.
\end{exa}

Interestingly, the preceding rewriting system can be shown to be convergent
even though it has an infinite number of critical pairs. Namely, it has the
three following obvious critical pairs
\begin{equation}
  \label{eq:ex-cp}
  \strid{gamma_cp_1}
  \qquad\qquad\qquad
  \strid{gamma_cp_2}
  \qquad\qquad\qquad
  \strid{gamma_cp_3}
\end{equation}
Moreover, for every morphism $\phi:(1+m)\to(1+n)$, the morphism~\eqref{eq:cp}
\begin{equation}
  \label{eq:cp}
  \strid{gamma_cp}
\end{equation}
can be rewritten in two different ways, giving rise to an infinite number of
critical pairs. Yet, the rewriting system can be shown to be
convergent~\cite{lafont:boolean-circuits}. This contrasts with string or term
rewriting systems, which always admit a finite number of critical pairs when
they are finite. The difference here seems to come essentially from the fact
that the generator $\gamma$ has multiple outputs as well as multiple inputs.

\bigskip

More recently, these tools have also been applied to more unexpected fields of
computer science. For example, the author has proposed a 3-polygraph presenting
a monoidal category of game semantics for first order propositional
logic~\cite{mimram:first-order-causality}.

\section{Representing 2-dimensional critical pairs}
\label{sec:repr-2-cp}
One of the main achievement of rewriting theory is to provide us with algorithms
to compute the critical pairs~\cite{baader-nipkow:trat}, which are at the basis
of many advanced tools, to automatically check the confluence of rewriting
systems or do Knuth-Bendix completions for example. The fact that the number of
critical pairs might be infinite for a finite rewriting system seems to indicate
that there is little hope to extend these nice techniques to higher
dimensions. We introduce here new theoretical tools in order to overcome this
difficulty.

In 3-dimensional rewriting systems, it turns out that we can nevertheless
recover a finite description of the critical pairs if we allow ourselves to
consider a more general notion of critical pair. For example, in the case of the
presentation of the category~$\Bij$ given in Example~\ref{ex:pres-bij}, the
rewriting system admits a finite number of ``critical pairs'' if we consider the
diagram on the left of~\eqref{eq:ccp} as a morphism:
\begin{equation}
  \label{eq:ccp}
  \strid{gamma_cp_ctxt}
  \qquad\qquad\qquad\qquad
  \strid{gamma_cp_ctxt_compact}
\end{equation}
Of course, this ``diagram'' does not formally make sense: it is not a proper
string diagram in the usual sense~\cite{joyal-street:geometry-tensor-calculus},
because of the ``punched hole'' in the right border. However, one can make this
intuitive approach precise by embedding the 2-category of terms into the free
2-category with adjoints it generates, which string diagrammatically corresponds
to adding the possibility of ``bending wires''. The diagram on the left
of~\eqref{eq:ccp} will thus be actually formalized by a diagram such as the one
on the right.

This section certainly constitutes the most novel part of the present
paper. However, the purpose of this article was to introduce the reader to the
concept of higher-dimensional rewriting theory, and to motivate the further
developments which are described below. We detail the construction of the
multicategory of compact contexts generated by a 2\nbd{}polygraph~$E$ and show
that the 2-category~$E^*$ it generates can be embedded into it. We also describe
how this setting can be used in order to formulate a unification algorithm for
3\nbd{}polygraphs. A preliminary version of a formal exposition of these later
works can be found in~\cite{mimram:2-cp}.

\subsection{The multicategory of contexts}
\label{sec:contexts}
In order to formalize the notion of critical pair for a 3-polygraph, we need to
first formalize the notion of context in the category~$E^*$ generated by a
2-polygraph~$E$. This methodology can easily be generalized to $n$-polygraphs,
but we only describe it here in the case of dimension 3 for
clarity. Intuitively, a context is a morphism with multiple typed ``holes'' or
``metavariables''. Since such a context can have multiple inputs (\ie multiple
holes) and one output (the morphism resulting from filling the holes with
morphisms), the contexts are naturally structured as a
multicategory~\cite{leinster:higher-operads}.

\begin{definition}
  \label{def:multicat}
  A \emph{multicategory}~$\M$ (or \emph{colored operad}) is given by
  \begin{itemize}
  \item a class $\M_0$ of \emph{objects},
  \item a class $\M_1(A_1,\ldots,A_n;A)$ of \emph{$n$-ary operations} for every
    objects $A_1,\ldots,A_n$ and~$A$, we write $f:A_1,\ldots,A_n\to A$ to
    indicate that $f\in\M_1(A_1,\ldots,A_n;A)$,
  \item a \emph{composition} function which to every family of operations
    $f_i:A_i^1,\ldots,A_i^{k_i}\to A_i$, with \hbox{$1\leq i\leq n$}, and
    $f:A_1,\ldots,A_n\to A$, associates a composite operation
    \[
    f\circ(f_1,\ldots,f_n)
    \qcolon
    A_1^1,\ldots,A_1^{k_1},\ldots,A_n^1,\ldots,A_n^{k_n}\to A
    \]
    that we often simply write $f(f_1,\ldots,f_n)$,
  \item an operation $\id_A:A\to A$, called \emph{identity}, for every object~$A$,
  \end{itemize}
  such that
  \begin{itemize}
  \item composition is associative:
    \[
    \begin{array}{cl}
      &f\circ\pa{f_1\circ(f_1^1,\ldots,f_1^{k_1}),\ldots,f_n\circ(f_n^1,\ldots,f_n^{k_n})}\\[2ex]
      \qeq&
      \pa{f\circ(f_1,\ldots,f_n)}\circ(f_1,\ldots,f_1^{k_1},\ldots,f_n^1,\ldots,f_n^{k_n})

    \end{array}
    \]
    for every choice of operations $f$, $f_i$ and $f_i^j$ for which the
    compositions make sense,
  \item identities are neutral elements for composition: for every
    \hbox{$f:A_1,\ldots,A_n\to A$}, we have $f\circ(\id_A,\ldots,\id_A)=f$.
  \end{itemize}
  A \emph{symmetric multicategory} is a multicategory~$\M$ together with a
  bijection between the classes $\M(A_1,\ldots,A_n;A)$ and
  $\M(A_{\sigma(1)},\ldots,A_{\sigma(n)};A)$ of operations, for every
  permutation \hbox{$\sigma:n\to n$}, these bijections having to satisfy some
  coherence axioms that will be omitted here.
\end{definition}

Suppose given a 2-polygraph~$E$. Given a pair of 0-cells $A,B\in E_0^*$ and a
pair of parallel 1-cells $f,g:A\to B$ in~$E_1^*$, we write~$E[f\To g]$ for the
polygraph obtained from~$E$ by adding a new 2-generator~$X:f\To g$, \ie $(E[f\To
g])_2=E_2\uplus\{X\}$, such that~$s_1(X)=f$ and~$t_1(X)=g$. The 2-cells
in~$E[f\To g]^*$ being obtained by tensoring and composing 2\nbd{}generators
in~$E_2\uplus\set{X}$, they can be seen as terms in~$E^*$ with a
metavariable~$X$ of type~$f\To g$. In particular, any morphism of~$E^*$ can be
seen as a morphism of~$E[f\To g]^*$ which does not use the metavariable and we
write~$I_{f\To g}:E^*\to E[f\To g]^*$ for the corresponding inclusion functor.

By the universal property of the free 2-category~$E[f\To g]^*$ over the
polygraph~$E[f\To g]$, for any 2-category~$\D$, functor~$F:E^*\to\D$ and
2-cell~$\alpha:F(f)\To F(g)$ of~$\D$, there exists a unique
functor~$F[\alpha]:E[f\To g]^*\to\D$ such that
\[
F[\alpha]\circ I_{f\To g}=F
\qqtand
F(X)=\alpha
\]
where~$X$ is the newly added metavariable. A nice understanding of this can be
given by adopting a more abstract definition of
polygraphs~\cite{guiraud-malbos:higher-cat-fdt, mimram:2-cp}, that we did not
give here for the sake of simplicity. As a particular case, \emph{substitution}
can be defined using this property: given a 2-cell~$\alpha:f\To g$ in~$E_2^*$,
when~$\D=E^*$ and~$F:E^*\to E^*$ is the identity functor~$\Id$, the image of the
image~$\Id[\alpha](\beta)$ of a 2\nbd{}cell $\beta:h\To i$ of~$E_2[f\To g]^*$ is
denoted by~$\beta[\alpha]$, and corresponds to the morphism obtained
from~$\beta$ by replacing every instance of the metavariable~$X$ by~$\alpha$.

We more generally write~$E[f_1\To g_1,\ldots,f_n\To g_n]$ for $(((E[f_1\To
g_1])\ldots)[f_n\To g_n])$, extend notation for substitution accordingly, and
often write~$X_i$ for the newly introduced variable of type~$f_i\To g_i$. If we
fix an enumeration~$E_2=\{\alpha_1,\ldots,\alpha_n\}$ of the 2-generators
of~$E$, and write~$f_i$ (\resp $g_i$) for the source (\resp target)
of~$\alpha_i$, then the polygraph~$E$ is isomorphic to~$E'[f_1\To
g_1,\ldots,f_n\To g_n]$, where~$E'$ is the 2-polygraph obtained from~$E$ by
removing all the 2\nbd{}generators, \ie $E_2'=\emptyset$. Now, given a fixed
index~$k$, we write~$\phi:E_2\to\N$ for the function such
that~$\varphi(\alpha_i)=0$ if $i\neq k$, and~$\varphi(\alpha_k)=1$. If we
write~$\mathcal{N}$ for the 2-category with one 0-cell, one 1-cell, and~$\N$ as
set of 2-cells with addition as vertical composition and~$0$ as vertical
identity, and~$F:E'^*\to\mathcal{N}$ for the only functor between the
2-categories~$E'^*$ and~$\mathcal{N}$, the deduced
functor~$F[\varphi]:E^*\to\mathcal{N}$ sends a 2-cell~$\beta$ to an integer,
which is called the \emph{weight} of the 2-generator~$\alpha_k$ in~$\beta$ and
written~$\sizeof{\beta}_{\alpha_k}$. The weight $\sizeof{\beta}_\alpha$
indicates the number of times a 2-generator~$\alpha$ of~$E$ occurs in a
2-cell~$\beta$ of~$E^*$. Similarly, the \emph{size}~$\sizeof\beta$ of a
2-cell~$\beta$ is the number of 2-generators that it contains, \ie $\sizeof\beta=\sum_{\alpha\in E_2}\sizeof{\beta}_\alpha$.

\begin{definition}
  \label{def:context-multicat}
  The \emph{multicategory of contexts} of a 2-polygraph~$E$, denoted
  by~$\contexts{E}$, is the symmetric multicategory whose
  \begin{itemize}
  \item objects are pairs of parallel 1-cells $f$ and $g$ of~$E^*$, which are
    often written~$f\To g$,
  \item operations~$\alpha:f_1\To g_1\ ,\ \ldots\ ,\ f_n\To g_n\to f\To g$ are the
    2-cells~$\alpha$ of type~$f\To g$ in $E[f_1\To g_1,\ldots,f_n\To g_n]^*$
    such that for every index~$i$, $\sizeof{\alpha}_{X_i}=1$, \ie every
    metavariable occurs exactly once in the morphism,
  \end{itemize}
  composition is induced by substitution as expected, the identity~$\id_{f\To
    g}:f\To g\to f\To g$ on $f\To g$ is the variable~$X:f\To g$, and symmetry
  corresponds to renaming of variables.
\end{definition}

\begin{example}
  If we consider the 2-polygraph~$E$ of symmetries introduced in
  Example~\ref{ex:pres-bij}, the morphism
  \[
  (\gamma\otimes\id_1)\circ(\id_1\otimes\gamma)\circ(\id_1\otimes(X_2\circ X_1)\otimes\id_1)\circ(\id_1\otimes\gamma)\circ(\gamma\otimes\id_1)
  \]
  in~$E[1\To 1,1\To 1]$ is a morphism in~$\contexts{E}(1\To 1,1\To 1;3\To
  3)$. Graphically,
  \[
  \strid{ex_bij_holes}
  \]
\end{example}

In particular, if we restrict to unary operations (operations whose type is of
the form $f_1\To g_1\to f\To g$) the structure of multicategory reduces to a
structure of category of contexts, which acts on the set of 2-cells of the
2-category~$E^*$: if~$K:f_1\To g_1\to f\To g$ is a unary context,
and~$\alpha:f_1\To g_1$ is a 2-cell of~$\C$, we often write~$K(\alpha):f\To g$
for the 2\nbd{}cell~$K[\alpha]$ of~$\C$ obtained from~$K$ by substituting the
variable by~$\alpha$. More generally, a context in a 3-polygraph is a context in
the underlying 2-polygraph, and if $K:f_1\To g_1\to f\To g$ is a context in a
3-polygraph and $r:\alpha\TO\beta:f_1\To g_1$ is a 3-cell in the
3-category~$E^*$, we write~$K(r):K(\alpha)\To K(\beta):f\To g$ for the obvious
3-cell obtained from~$r$ by composing (in dimensions~$0$ and~$1$) with identity
2-cells. Notice also that the nullary operations are precisely the 2-cells
of~$E^*$.

It is possible to more generally define a notion of multicategory of
contexts~$\contexts\C$ of any 2\nbd{}category~$\C$, which coincides with the
previously given definition in the case where~$\C$ is of the form~$\C=E^*$ for
some 2-polygraph~$E$. We will only need to consider this last case in the
following, which is why we do not give the general definition.

\subsection{Critical pairs in 3-polygraphs}
\label{sec:cp-3}
Suppose fixed a 3-polygraph~$E$, freely generating a 3-category~$E^*$. Two
coinitial 3-cells
\[
r_1:\alpha\TO\beta_1:f\To g:A\to B
\qtand
r_2:\alpha\TO\beta_2:f\To g:A\to B
\]
of this 3-category are \emph{joinable} when there exists a 2-cell $\beta:f\To g$
and two 3\nbd cells $s_1:\beta_1\TO\beta$ and $s_2:\beta_2\TO\beta$ such that
$s_1\circ_2 r_1=s_2\circ_2 r_2$, where~$\circ_2$ denotes the composition in
dimension~$2$:
\[
\vxym{
  &\ar@3{->}[dl]_{r_1}\alpha\ar@3{->}[dr]^{r_2}&\\
  \beta_1\ar@3{.>}[dr]_{s_1}&&\beta_2\ar@3{.>}[dl]^{s_2}\\
  &\beta&
}
\]
Given a 3-generator~$r:\alpha\TO\beta$ and two 2-cells $\alpha'$ and~$\beta'$,
we write $\alpha'\TO^{K,r}\beta'$,
when there exists a unary context~$K$ such that~$K(\alpha)=\alpha'$ and
$K(\beta)=\beta'$. A polygraph is \emph{locally confluent} when for every cells
such that~$\alpha\TO^{K_1,r_1}\beta_1$ and~$\alpha\TO^{K_2,r_2}\beta_2$, the two
3-cells $K_1(r_1)$ and $K_2(r_2)$ are joinable. It is \emph{terminating} when
there is no infinite sequence
$\alpha_1\TO^{K_1,r_1}\alpha_2\TO^{K_2,r_2}\ldots$

The Newman lemma is valid in this
framework~\cite{guiraud-malbos:higher-cat-fdt}:

\begin{lemma}
  \label{lemma:newman-polygraphs}
  A terminating polygraph is confluent if and only if it is locally confluent.
\end{lemma}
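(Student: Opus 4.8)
The plan is to reduce the statement to the classical Newman lemma by observing that the one-step rewriting relation $\TO^{K,r}$ (for all choices of unary context $K$ and $3$-generator $r$) is a binary relation on the $2$-cells of $E^*$, and then to apply abstract diagrammatic reasoning to this relation. Concretely, I would first note that ``terminating'' for the polygraph means exactly that the relation $\To$ obtained by setting $\alpha\To\beta$ whenever $\alpha\TO^{K,r}\beta$ for some $K,r$ is a well-founded (Noetherian) relation on $2$-cells, and ``locally confluent'' means that whenever $\alpha\To\beta_1$ and $\alpha\To\beta_2$ there is a common $\beta$ with $\beta_1\To^*\beta$ and $\beta_2\To^*\beta$. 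The subtlety compared to the plain abstract setting is that here local confluence is phrased in terms of \emph{joinability of the $3$-cells} $K_1(r_1)$ and $K_2(r_2)$, not merely of the $2$-cells $\beta_1,\beta_2$; so the first real step is to check that $3$-cell joinability, as defined just above the statement, implies (and is in fact equivalent to, given the globular structure) the bare $2$-cell joinability $\beta_1\To^*\beta\,{}^*\!\!\From\beta_2$. That direction is immediate: from $s_1:\beta_1\TO\beta$ and $s_2:\beta_2\TO\beta$ one reads off rewriting sequences $\beta_1\To^*\beta$ and $\beta_2\To^*\beta$ by decomposing $s_1,s_2$ into generating $3$-cells in context.

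Next I would run the usual Newman argument by Noetherian induction on the termination order. The forward direction (confluent $\Rightarrow$ locally confluent) is trivial. For the converse, assume local confluence and termination; to show confluence it suffices to show that every $2$-cell $\alpha$ has the property that any two rewriting sequences out of $\alpha$ can be completed to a common reduct. Proceed by well-founded induction: given $\alpha\To\beta_1\To^*\gamma_1$ and $\alpha\To\beta_2\To^*\gamma_2$, apply local confluence to the first steps $\alpha\To\beta_1$, $\alpha\To\beta_2$ to get a common $2$-cell $\delta$ with $\beta_1\To^*\delta$, $\beta_2\To^*\delta$; then $\beta_1$ (being a strict successor of $\alpha$) satisfies the induction hypothesis, so the two sequences $\beta_1\To^*\gamma_1$ and $\beta_1\To^*\delta$ join at some $\delta_1$; likewise $\beta_2\To^*\gamma_2$ and $\beta_2\To^*\delta$ join at some $\delta_2$; finally $\delta$ satisfies the induction hypothesis, so $\delta\To^*\delta_1$ and $\delta\To^*\delta_2$ join at a common reduct, which is then a common reduct of $\gamma_1$ and $\gamma_2$. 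This is the standard diagram chase and needs only that $\To$ is well-founded.

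The one place that needs genuine care — and which I expect to be the main obstacle — is the bookkeeping between the \emph{relational} statement (confluence of $\To$ on $2$-cells) and the \emph{categorical} statement one really wants in this setting, namely that the generated $3$-category has the property that any two coinitial $3$-cells are joinable in the sense defined before the lemma (with the equality $s_1\circ_2 r_1 = s_2\circ_2 r_2$ of composite $3$-cells). I would handle this by remarking that a $3$-cell $r:\alpha\TO\beta$ in $E^*$ decomposes — up to the axioms of $3$-categories — as a vertical ($\circ_2$) composite of elementary rewriting steps $K_i(r_i)$, so that the existence of a $3$-cell $\alpha\TO\beta$ is \emph{equivalent} to the existence of a rewriting sequence $\alpha\To^*\beta$; under this equivalence the relational confluence proved above yields exactly the required $3$-cells $s_1,s_2$, and the equality of the two composites $s_1\circ_2 r_1$ and $s_2\circ_2 r_2$ can be arranged by choosing them to be the composites along the completed diagram. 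This decomposition fact is part of the free-$3$-category machinery recalled informally in Section~\ref{sec:polygraphs}, and the cleanest way to invoke it is to cite~\cite{guiraud-malbos:higher-cat-fdt}, where this form of Newman's lemma for polygraphs is established; the body of the proof here is then just the abstract diagram chase above.
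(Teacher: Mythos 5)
The paper does not actually prove this lemma: it states it and defers entirely to the cited reference of Guiraud and Malbos, so there is no in-paper argument to match yours against. Your reconstruction via the classical Newman diagram chase by Noetherian induction is the right idea and is essentially what the cited work does, so the mathematical content is sound. The one place where your write-up is looser than it should be is exactly the point you flag at the end: in this setting ``joinable'' is not the bare relational statement but requires the \emph{equality} $s_1\circ_2 r_1=s_2\circ_2 r_2$ of composite $3$-cells in the free $3$-category. It is not enough to first prove relational confluence of $\To$ on $2$-cells and then ``arrange'' the equality afterwards; rather, the well-founded induction itself must carry the stronger invariant. This does work, but you should say so explicitly: local confluence hands you a commuting square of $3$-cells, each appeal to the induction hypothesis hands you a commuting completion, and pasting commuting diagrams along $\circ_2$ (using the $3$-category axioms, interchange in particular) again yields a commuting diagram, so the final common reduct comes equipped with the required equality of composites. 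With that strengthened induction hypothesis made explicit (and with ``confluent'' read as: any two coinitial $3$-cells are joinable in the same strong sense), your proof is complete; your observation that every $3$-cell of $E^*$ decomposes as a $\circ_2$-composite of cells of the form $K(r)$ is the correct bridge between the relational and categorical formulations.
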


In some simple cases, termination of polygraphs can be deduced from the fact
that all the rules make the size of morphisms decrease:
\begin{lemma}
  \label{lemma:size-term}
  If~$E$ is a 3-polygraph such that for every 3-generator $r:\alpha\TO\beta$ we
  have $\sizeof\alpha>\sizeof\beta$, then~$E$ is terminating.
\end{lemma}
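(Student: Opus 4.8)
The plan is to use the size $\sizeof{-}$ itself as a termination measure: I will show that it strictly decreases along every rewriting step $\TO^{K,r}$, and since it takes values in~$\N$, no infinite rewriting sequence can then exist, so~$E$ is terminating by definition.

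The first thing I would record is that $\sizeof{-}$ is compatible with the operations of the free $2$-category. By construction each weight $\sizeof{-}_\alpha$ is realized by a $2$-functor $E^*\to\mathcal{N}$ (the functor $F[\varphi]$ for the appropriate~$\varphi$), so the total size $\sizeof{-}=\sum_{\alpha\in E_2}\sizeof{-}_\alpha$ is again realized by a single $2$-functor $W\colon E^*\to\mathcal{N}$, namely the unique one sending every $2$-generator of~$E$ to~$1\in\N$. In particular $\sizeof{\beta\circ\gamma}=\sizeof\beta+\sizeof\gamma$ for vertical composites, $\sizeof{\beta\otimes\gamma}=\sizeof\beta+\sizeof\gamma$ for horizontal composites, and $\sizeof{\id}=0$ for identities; the same holds, with the same proof, for the polygraph $E[f_1\To g_1]$ obtained from~$E$ by adjoining a metavariable~$X$, where $\sizeof{-}_X$ and $\sizeof{-}$ are again $2$-functors into~$\mathcal{N}$.

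The key step is the behaviour of $\sizeof{-}$ under substitution. Let $K\colon f_1\To g_1\to f\To g$ be a unary context (Definition~\ref{def:context-multicat}): a $2$-cell of $E[f_1\To g_1]^*$ of type $f\To g$ in which~$X$ occurs exactly once, so $\sizeof{K}_X=1$. For a $2$-cell $\gamma\colon f_1\To g_1$ of~$E^*$ one has $K(\gamma)=\Id[\gamma](K)$, where $\Id[\gamma]\colon E[f_1\To g_1]^*\to E^*$ is the unique $2$-functor fixing~$E^*$ and sending~$X$ to~$\gamma$; hence $\sizeof{K(\gamma)}=(W\circ\Id[\gamma])(K)$ and $W\circ\Id[\gamma]$ is the $2$-functor into~$\mathcal{N}$ sending every $2$-generator of~$E$ to~$1$ and~$X$ to~$\sizeof\gamma$. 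Writing~$K$ as an iterated vertical and horizontal composite of generators and identity $2$-cells, a structural induction (legitimate because every map involved is a $2$-functor, hence independent of the chosen decomposition) gives
\[
\sizeof{K(\gamma)}\;=\;c_K+\sizeof{K}_X\cdot\sizeof\gamma\;=\;c_K+\sizeof\gamma,
\]
where $c_K=\sum_{\alpha\in E_2}\sizeof{K}_\alpha$ counts the occurrences in~$K$ of the original generators of~$E$ and is therefore independent of~$\gamma$.

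It then remains to conclude. Consider a rewriting step $\alpha'\TO^{K,r}\beta'$, arising from a $3$-generator $r\colon\alpha\TO\beta$ and a unary context~$K$ with $K(\alpha)=\alpha'$ and $K(\beta)=\beta'$. By the displayed identity, $\sizeof{\alpha'}=c_K+\sizeof\alpha$ and $\sizeof{\beta'}=c_K+\sizeof\beta$, whence $\sizeof{\alpha'}-\sizeof{\beta'}=\sizeof\alpha-\sizeof\beta>0$ by the hypothesis on~$r$. Thus $\sizeof{-}$ strictly decreases at every step, so an infinite sequence $\alpha_1\TO^{K_1,r_1}\alpha_2\TO^{K_2,r_2}\ldots$ would yield an infinite strictly decreasing sequence in~$\N$, which is impossible; hence~$E$ is terminating. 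I expect the only point needing genuine care to be the substitution identity above — everything else is a one-line consequence of the well-foundedness of~$\N$ — and within that step the delicate bit is controlling how the metavariable interacts with the two compositions, which is exactly where the defining property of a unary context, that~$X$ occurs exactly once, is used.
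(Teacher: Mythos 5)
Your proof is correct. The paper actually states Lemma~\ref{lemma:size-term} without any proof at all, treating it as an immediate consequence of the definitions; your argument — realizing $\sizeof{-}$ as the $2$-functor $E^*\to\mathcal{N}$ sending every $2$-generator to $1$, deducing $\sizeof{K(\gamma)}=c_K+\sizeof\gamma$ from the fact that a unary context contains the metavariable exactly once, and concluding by well-foundedness of $\N$ — is precisely the intended justification, with the one genuinely non-trivial point (additivity of the size under substitution into a context) correctly identified and handled via the universal property of the free $2$-category.
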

\noindent
This simple criterion for showing the termination of a polygraph is often too
weak. More elaborate termination orders for 3-polygraphs have been studied by
Guiraud~\cite{guiraud:termination-3-rewr}. In this paper, we are mostly
interested in studying local confluence of polygraphs and will not detail those.
The usual notion of critical pair can be extended to the setting of 3-polygraphs
as follows.

\begin{definition}
  \label{def:unifier}
  A \emph{unifier} of a pair of 2-cells
  \[
  \alpha_1:f_1\To g_1
  \qtand
  \alpha_2:f_2\To g_2
  \]
  in a 2-category~$\C$ consists of a pair of cofinal unary contexts
  \[
  K_1:f_1\To g_1\to f\To g
  \qtand
  K_2:f_2\To g_2\to f\To g
  \]
  such that $K_1(\alpha_1)=K_2(\alpha_2)$. A unifier is \emph{most general}
  when it is
  \begin{itemize}
  \item \emph{non-trivial}: there is no binary context
    \[
    K
    \qcolon
    f_1\To g_1,f_2\To g_2\qto f\To h
    \]
    such that
    \[
    K_1=K\circ(\id_{f_1\To g_1},\alpha_2)
    \qtand
    K_2=K\circ(\alpha_1,\id_{f_2\To g_2})
    \]
  \item \emph{minimal}: for every unifier $(K_1', K_2')$ of~$\alpha_1$
    and~$\alpha_2$ such that $K_1=K_1''\circ K_1'$ and $K_2=K_2''\circ K_2'$ for
    some contexts~$K_1''$ and~$K_2''$, the unary contexts $K_1''$ and~$K_2''$
    are invertible.
  \end{itemize}
\end{definition}

\begin{definition}
  \label{def:cp}
  A \emph{critical pair} $(K_1,r_1,K_2,r_2)$ in a 3-polygraph~$\eqth{S}$
  consists of a pair of 3\nbd{}generators
  \[
  r_1:\alpha_1\TO\beta_1:f_1\To g_1
  \qtand
  r_2:\alpha_2\TO\beta_2:f_2\To g_2
  \]
  and a most general unifier
  \[
  K_1:f_1\To g_1\to f\To g
  \qtand
  K_2:f_2\To g_2\to f\To g
  \]
  of~$\alpha_1$ and~$\alpha_2$. We sometimes say that the 2-cell
  \hbox{$\alpha=K_1(\alpha_1)=K_2(\alpha_2)$} is a critical pair, by abuse of
  language.
\end{definition}

\begin{example}
  Consider the 3-polygraph~$E$ presenting the simplicial category~$\Delta$
  introduced in Example~\ref{ex:simpl-2-pres}. The morphisms
  \[
  \strid{rs_mon_unif2}
  \qquad\qquad\qquad\qquad
  \strid{rs_mon_unif1}
  \]
  are unifiers of the rules~$l$ and~$r$ which are not most general because they
  are respectively trivial and not minimal. The critical pairs (and thus the
  most general unifiers) have been described in Example~\ref{ex:simpl-2-pres}.
\end{example}

\subsection{Compact 2-categories}
As explained in the introduction of the present section, our aim is intuitively
to be able to consider diagrams such as the one on the left of~\eqref{eq:ccp} as
morphisms. Here, we achieve this by embedding the 2-category~$E^*$ generated by
a polygraph~$E$ into the compact 2\nbd{}category it freely generates. In these
categories, every 1-cell admits both a left and right adjoint, which graphically
essentially amounts to have the possibility to bend wires as in the right
of~\eqref{eq:ccp}.

The notion of adjunction in the $2$-category~$\Cat$ of categories, functors and
natural transformations can be generalized to any $2$-category~$\C$ as
follows~\cite{kelly-street:review-two-cat}. A $1$-cell $f:A\to B$ is \emph{left
  adjoint} to a $1$-cell~$g:B\to A$ (or $g$ is \emph{right adjoint} to~$f$),
that we will write $f\dashv g$, when there exist two $2$\nbd{}cells~$\eta:\id_A\To
f\otimes g$ and~$\varepsilon:g\otimes f\To\id_B$, called respectively the
\emph{unit} and the \emph{counit} of the adjunction and depicted respectively by
\[
\strid{adj_unit}
\qquad\qquad\qquad
\strid{adj_counit}
\]
such that $(f\otimes\varepsilon)\circ(\eta\otimes f)=\id_f$ and
$(\varepsilon\otimes g)\circ(g\otimes\eta)=\id_g$. These equations are called
the \emph{zig-zag laws} because of their graphical representation:
\begin{equation*}
  \strid{adj_zz_f_l}
  \qeq
  \strid{adj_zz_f_r}
  \qquad\qquad\qquad
  \strid{adj_zz_g_l}
  \qeq
  \strid{adj_zz_g_r}
\end{equation*}

The notion of 2-category with adjoints was studied in the case of symmetric
monoidal categories~\cite{kelly-laplaza:coherence-compact-closed} (where they
are called \emph{compact closed} categories), monoidal
categories~\cite{joyal-street:braided-tensor-categories} (where they are called
\emph{autonomous} categories), as well as other variants such as
\emph{spherical} categories~\cite{barrett-westbury:spherical-categories};
see~\cite{selinger:graphical-survey} for a concise presentation of those.

\begin{definition}
  A 2-category is \emph{compact} when every 1\nbd cell admits both a left and a
  right adjoint.
  A \emph{strictly compact} 2-category is a compact 2-category in which every
  1\nbd cell \hbox{$f:A\to B$} has an assigned left adjoint \hbox{$f^{-1}:B\to
    A$} and an assigned right adjoint \hbox{$f^{+1}:B\to A$}. We write
  $\eta_f^{+}$ and~$\varepsilon_f^{+}$ (\resp $\eta_f^{-}$ and
  $\varepsilon_f^{-}$) for the unit and the counit of the adjunction $f\dashv
  f^{+1}$ (\resp $f^{-1}\dashv f$). The following coherence axioms should
  moreover be satisfied:
  \begin{itemize}
  \item for every pair of composable 1-cells $f$ and $g$,
    \[
    (f\otimes g)^{-1}=g^{-1}\otimes f^{-1}
    \qquad\qquad
    (f\otimes g)^{+1}=g^{+1}\otimes f^{+1}
    \]
    and
    \[
    \eta_{f\otimes g}^+=(f\otimes\eta_g^+\otimes f^{+1})\circ\eta_f^+
    \qquad\qquad
    \varepsilon_{f\otimes g}^+=\varepsilon_g^+\circ(g^{+1}\otimes\varepsilon_f^+\otimes g)
    \]
     and
    \[
    \eta_{f\otimes g}^-=(g^{-1}\otimes\eta_f^-\otimes g)\circ\eta_g^-
    \qquad\qquad
    \varepsilon_{f\otimes g}^-=\varepsilon_f^-\circ(f\otimes\varepsilon_g^-\otimes f^{-1})
    \]
  \item for every 0-cell $A$,
    \[
    \id_A^{-1}=\id_A=\id_A^{+1}
    \]
    and
    \[
    \eta_{\id_A}^+=\id_A=\varepsilon_{\id_A}^+
    \qquad\qquad
    \eta_{\id_A}^-=\id_A=\varepsilon_{\id_A}^-
    \]
  \item for every 1-cell $f$,
    \[
    (f^{+1})^{-1}=f=(f^{-1})^{+1}
    \]
    and
    \[
    \eta_{f^{-1}}^+=\eta_f^-
    \qquad\qquad
    \varepsilon_{f^{-1}}^+=\varepsilon_f^-
    \qquad\qquad
    \eta_{f^{+1}}^-=\eta_f^+
    \qquad\qquad
    \varepsilon_{f^{+1}}^-=\varepsilon_f^+
    \]
  \end{itemize}

  For any 1-cell~$f:A\to B$ in a strictly compact 2\nbd{}category and
  integer~$n$, the morphism~$f^n$ denotes the morphism defined by~$f^0=f$,
  $f^{n+1}=(f^n)^{+1}$ and~$f^{n-1}=(f^n)^{-1}$. We also simply write
  $\eta_f:B\To f^{-1}\otimes f$ and \hbox{$\varepsilon_f:f\otimes f^{-1}\To A$}
  for the unit and the counit of the adjunction between $f^{-1}$ and $f$.
\end{definition}

In the following, we suppose for simplicity that all the compact categories we
consider are equipped with a structure of strictly compact category. This is not
restrictive since every compact 2-category can be shown to be equivalent to a
strict one using an argument similar to the coherence theorem for compact closed
categories~\cite{kelly-laplaza:coherence-compact-closed}. The category of
compact categories is denoted by~$\CCat$.

\subsection{Embedding 2-categories into compact 2-categories}
\label{sec:compact-embedding}
There is an obvious forgetful functor from the category of compact 2-categories
to the category of 2-categories, and this forgetful functor admits a left
adjoint. We write $\compact{\C}$ for the free compact 2-category on a
2-category~$\C$ (the $\mathcal{A}$ here stands for ``adjoints''). The
construction of this free 2-category is detailed
in~\cite{preller-lambek:free-compact} and consists essentially in adapting the
work of Kelly and Laplaza on compact closed
categories~\cite{kelly-laplaza:coherence-compact-closed} to monoidal categories
which are not supposed to be symmetric. We recall briefly this construction
here.

The underlying category of a compact 2-category is naturally equipped with a
structure of ``category with formal adjoints'' in the following sense:

\begin{definition}
  A \emph{category with formal adjoints} $(\C,(-)^{-1},(-)^{+1})$ is a category
  together with two functors
  \[
  (-)^{-1}:\C\to\C^\op
  \qtand
  (-)^{+1}:\C^\op\to\C
  \]
  such that $((-)^{-1})^{+1}=\id_\C$ and $((-)^{+1})^{-1}=\id_{\C^\op}$.
\end{definition}

Given a 2-category~$\C$, with~$\cattrunc\C1$ as underlying category, the
underlying category of~$\compact{\C}$ is the free category with formal adjoints
on~$\cattrunc\C1$. More concretely, this category is the free category on the
graph whose objects are the objects of~$\cattrunc\C1$ as objects and whose
arrows $f^n:A\to B$ are pairs constituted of an integer $n\in\Z$, called a
\emph{winding number}, and a morphism $f:A\to B$ in~$\C$ if~$n$ is even (\resp a
morphism $f:B\to A$ in~$\C$ if~$n$ is odd), quotiented by the following
equalities:
\begin{itemize}
\item for every pair of composable morphisms $f^n$ and $g^n$,
  \[
  f^n\otimes g^n \qeq
  \begin{cases}
    (f\otimes g)^n&\text{if $n$ is even}\\
    (g\otimes f)^n&\text{if $n$ is odd}
  \end{cases}
  \]
\item for every object~$A$,
  \[
  (\id_A)^n
  \qeq
  \id_A
  \]
\end{itemize}
The 2-cells of~$\compact{\C}$ are formal vertical and horizontal composites of
\begin{itemize}
\item $\alpha^0:f^0\To g^0$, where $\alpha:f\To g$ is a 2-cell of~$\C$,
\item $\eta_{f^n}:\id_B\To f^{n-1}\otimes f^n$, for every 1-cell $f^n:A\to B$,
\item $\varepsilon_{f^n}:f^n\otimes f^{n-1}\To\id_A$, for every 1-cell $f^n:A\to B$,
\end{itemize}
quotiented by
\begin{itemize}
\item the axioms of 2-categories (see Definition~\ref{def:2-cat}),
\item for every pair of vertically composable 2-cells $\alpha^0$ and $\beta^0$,
  \[
  \beta^0\circ\alpha^0
  \qeq
  (\beta\circ\alpha)^0
  \]
\item for every 1-cell $f^0$,
  \[
  \id_{f^0}
  \qeq
  (\id_f)^0
  \]
\item for every pair of horizontally composable 2-cells $\alpha^0$ and
  $\beta^0$,
  \[
  \alpha^0\otimes\beta^0
  \qeq
  (\alpha\otimes\beta)^0
  \]
\item for every 1-cell $f^n$,
  \[
  (f^{n-1}\otimes\varepsilon_{f^n})\circ(\eta_{f^n}\otimes f^{n-1})=f^{n-1}
  \qtand
  (\varepsilon_{f^n}\otimes f^n)\circ(f^n\otimes\eta_{f^n})=f^n
  \]
\end{itemize}
Graphically, if we write respectively
\[
\strid{compact_f}
\qquad\qquad
\strid{compact_eta}
\qquad\qquad
\strid{compact_eps}
\]
for $\alpha^0:f^0\To g^0:A\to B$, $\eta_{f^n}$ and $\varepsilon_{f^n}$ (where
$f^n:A\to B$), the four last equalities can be pictured as follows
\[
\begin{array}{c}
  \strid{compact_v_l}
  =
  \strid{compact_v_r}
  \qquad\qquad
  \strid{compact_id_l}
  =
  \strid{compact_id_r}
  \\
  \strid{compact_h_l}
  =
  \strid{compact_h_r}
  \\
  \strid{compact_adjl_l}
  =
  \strid{compact_adjl_r}
  \qquad\qquad
  \strid{compact_adjr_l}
  =
  \strid{compact_adjr_r}
\end{array}
\]
The string diagrams for compact categories are studied
in~\cite{joyal-street:planar-diagrams-tensor-algebra}.

\begin{remark}
  If~$\C$ is the 2-category~$E^*$ generated by a 2-polygraph~$E$, the compact
  2\nbd{}category~$\compact{\C}$ is presented by the 3-polygraph~$F$ such that
  \begin{itemize}
  \item $F_0=E_0$
  \item $F_1=\setof{f^n}{f\in E_1, n\in\Z}$
  \item $F_2=\setof{\alpha^0}{\alpha\in E_2}\uplus\setof{\eta_{f^n},\varepsilon_{f^n}}{f^n\in F_1}$
  \item $F_3=\setof{l_{f^n},r_{f^n}}{f^n\in F_1}$
  \end{itemize}
  with
  \[
  \begin{array}{r@{\qcolon}r@{\qTO}l}
    l_{f^n}&(f^{n-1}\otimes\varepsilon_{f^n})\circ(\eta_{f^n}\otimes f^{n-1})&f^{n-1}\\
    r_{f^n}&(\varepsilon_{f^n}\otimes f^n)\circ(f^n\otimes\eta_{f^n})&f^n
  \end{array}
  \]
  and other cells have the obvious source and target. By
  Lemma~\ref{lemma:size-term}, the polygraph~$F$ is terminating and by
  Lemma~\ref{lemma:newman-polygraphs} it is confluent since all its critical
  pairs, which are of the form
  \[
  \strid{compact_cp1}
  \qtand
  \strid{compact_cp2}
  \]
  for some 1-cell $f^n$, are joinable.
\end{remark}

\begin{lemma}
  With the notations of the preceding remark, if $f_1,\ldots,f_m$ and
  $g_1,\ldots,g_n$ are parallel lists of composable morphisms of~$E^*$,
  then the 2-cells
  \begin{equation*}
    \label{eq:compact-embedding-cell}
    \alpha
    \qcolon
    f_1^0\otimes\ldots\otimes f_m^0
    \qTo
    g_1^0\otimes\ldots\otimes g_n^0
  \end{equation*}
  in the underlying 2-category of~$F$ which are normal forms (\wrt the rewriting
  rules of~$F$) do not contain any 2-generator~$\eta_{f^k}$
  or~$\varepsilon_{f^k}$.
\end{lemma}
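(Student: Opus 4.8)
The plan is to carry out the argument inside the presentation $F$ of the preceding remark, viewing a 2-cell of the underlying 2-category of $F$ as a string diagram whose nodes are the boxes $\beta^0$ (for $\beta\in E_2$), the cups $\eta_{h^n}$ and the caps $\varepsilon_{h^n}$ (for $h\in E_1$), and whose wires are labelled by pairs $h^k$ with $h\in E_1$ and $k\in\Z$, the integer $k$ being the \emph{winding number} of the wire; a normal form is then a diagram containing no subdiagram matching the left-hand side of $l_{h^n}$ or $r_{h^n}$. First I would record three elementary observations about winding numbers: (a)~every wire incident to a box $\beta^0$ has winding number~$0$, since the source and target $1$-cells of $\beta^0$ are of the form $w^0$ for $w$ a word over $E_1$, hence entirely at winding~$0$; (b)~by the hypothesis on its type, every source and target wire of the 2-cell in the statement has winding number~$0$; (c)~the two wires at any cup $\eta_{h^n}$ (\resp cap $\varepsilon_{h^n}$) are $h^{n-1}$ and $h^n$, so their winding numbers differ by exactly~$1$.

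Then I would argue by contradiction. Assume a normal form $\alpha$ of the stated type contains a cup or a cap; by~(c) some wire of $\alpha$ has nonzero winding number, so we may choose a wire $w$, labelled $h^k$, with $|k|$ maximal among all wires of $\alpha$, and then $k\neq 0$. The source end of $w$ is attached to the source boundary, to an output of a box $\beta^0$, or to an output leg of a cup (caps have no outputs); the first two are ruled out by~(a)--(b) because $k\neq 0$, so $w$ is an output leg of some cup $\eta_{h^m}$, and maximality of $|k|$ forbids the other leg of that cup from having larger winding-number modulus, which forces $w$ to be the right leg of $\eta_{h^k}$ when $k\geq 1$ and the left leg of $\eta_{h^{k+1}}$ when $k\leq-1$. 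Dually, the target end of $w$ is an input leg of some cap $\varepsilon_{h^{m'}}$, and the same maximality forces $w$ to be the left leg of $\varepsilon_{h^k}$ when $k\geq 1$ and the right leg of $\varepsilon_{h^{k+1}}$ when $k\leq-1$. Hence, when $k\geq 1$, $\alpha$ contains the subdiagram $(h^{k-1}\otimes\varepsilon_{h^k})\circ(\eta_{h^k}\otimes h^{k-1})$, which is a redex for $l_{h^k}$; and when $k\leq-1$ it contains $(\varepsilon_{h^{k+1}}\otimes h^{k+1})\circ(h^{k+1}\otimes\eta_{h^{k+1}})$, a redex for $r_{h^{k+1}}$. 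In either case $\alpha$ is not a normal form, contradicting the hypothesis, so $\alpha$ contains no cup or cap, i.e.\ no 2-generator $\eta_{h^k}$ or $\varepsilon_{h^k}$.

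The part that needs to be done with care is the matching of the configuration around $w$ with an actual redex: one has to keep the left/right roles of the legs straight (this is what decides whether the redex has type $l$ or $r$), and one has to check that the cup really occurs below the cap in the diagram — which holds because $w$ is a single identity wire joining an output of the cup to an input of the cap, so in any factorisation the cup is applied before the cap — and that the other legs of the cup and cap, which lie on the boundary of the matched redex, impose no obstruction to the rewriting step. I expect this bookkeeping, rather than the extremal winding-number argument itself, to be the only real source of friction; in particular the proof is self-contained and does not invoke the confluence of $F$, which is consistent with this lemma being precisely the ingredient needed to deduce that $E^*$ embeds faithfully into $\compact{E^*}$.
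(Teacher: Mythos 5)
Your proof is correct and rests on the same two pillars as the paper's: the decomposition of a 2-cell of~$F^*$ into layers so that every wire has a well-defined upper and lower attachment (a boundary of the diagram, a box~$\beta^0$, or a leg of some $\eta_{h^n}$ or~$\varepsilon_{h^n}$), and the observation that a wire of nonzero winding number can attach neither to the boundary nor to a box. Where you diverge is in the finishing move. The paper starts from an arbitrary $\varepsilon_{f^k}$ with $k>0$, traces its $f^k$ leg upward, and finds that avoiding a redex forces the presence of $\eta_{f^{k+1}}$, then of $\varepsilon_{f^{k+2}}$, and so on; the contradiction obtained is with the \emph{finiteness} of the diagram (an infinite escalating chain of generators), normality being used only to kill one branch of each case analysis, and the cases of $\eta$ versus $\varepsilon$ and of $k>0$ versus $k\leq 0$ are dispatched separately. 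You instead select a wire whose winding modulus is maximal and show that both of its ends must close up into the same zig-zag, so the contradiction is directly with \emph{normality}. Your version is somewhat tidier: it treats cups and caps and both signs of~$k$ uniformly in a single pass, and replaces the induction on an infinite chain by an extremal choice. Both arguments lean on the same unproved bookkeeping, namely that a cup whose output leg is wired directly to the matching input leg of a cap always yields an occurrence of the left-hand side of $l_{h^n}$ or $r_{h^n}$ up to the interchange law -- the paper asserts this in passing when it says that the rule ``could be applied'' -- so you are not being held to a higher standard there than the original.
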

\begin{proof}
  It is easy to show that a 2-cell~$\alpha$ in~$F^*$ can be written as a
  composite of morphisms of the form~$\id_g\otimes\beta\otimes\id_h$
  where~$\beta$ is either a 2-cell of~$\C$ or a morphism of the
  form~$\eta_{f^k}$ or~$\varepsilon_{f^k}$ (see for
  example~\cite{lafont:boolean-circuits}). Suppose that~$\alpha$ contains a
  2-generator of the form~$\varepsilon_{f^k}$ with~$k>0$. It can therefore be
  written as a composite of the form
  \[
  \strid{embedding_alpha_1}
  \]
  The 2-cells~$\alpha_1$ and~$\alpha_2$ are noted with boxes for clarity and $0$
  stands for a 1-cell whose winding number is~$0$. Since the only generators
  whose target contain a 1-cell of the form~$f^k$ are~$\eta_k$ and~$\eta_{k+1}$,
  the 2-cell~$\alpha_1$ is necessarily of one of the three following forms:
  \[
  \hspace{-4ex}
  \strid{embedding_alpha_1_1}
  \]
  or
  \[
  \strid{embedding_alpha_1_2}
  \tor
  \strid{embedding_alpha_1_3}
  \]
  The first case is impossible since $k>0$ and we have supposed that all the
  winding numbers of the 1-generators occurring in the source of~$\alpha$
  are~$0$. The second case is not possible either since the morphism $\alpha$
  would not be a normal form (the rule~$l_{f^{k-1}}$ could be applied), and the
  2-cell~$\alpha$ thus contains a 2-generator~$\eta_{f^{k+1}}$. By using a
  similar argument, $\alpha$ also contains the
  2-generator~$\varepsilon_{f^{k+2}}$. So, by induction, the 2-cell~$\alpha$
  would contain all the 2\nbd{}generators~$\varepsilon_{f^{k+2i}}$ with~$i\in\N$
  and would therefore be a composite of an infinite number of generators. This
  is absurd since the 2-cells in~$F^*$ are inductively generated. We deduce
  that~$\alpha$ does not contain a 2-generator of the form~$\varepsilon_{f^k}$
  with~$k>0$. Similarly, it does not contain a 2-generator~$\eta_{f^k}$
  with~$k>0$. The cases where $k\leq 0$ are also similar (we construct an
  infinite sequence of generators that~$\alpha$ would contain, with strictly
  decreasing winding numbers).
\end{proof}
From this, we can deduce that the 2-cells
\[
\alpha
\qcolon
f_1^0\otimes\ldots\otimes f_m^0
\qTo
g_1^0\otimes\ldots\otimes g_n^0
\]
in~$F^*$ are in bijection with the 2-cells
\[
\alpha
\qcolon
f_1\otimes\ldots\otimes f_m
\qTo
g_1\otimes\ldots\otimes g_n
\]
of~$\C$, which shows that the embedding of $E^*$ into $F^*$ is full and faithful
(this embedding is the functor defined as the identity on objects, as~$f\mapsto
f^0$ on 1-cells and as~$\alpha\mapsto\alpha^0$ on 2\nbd{}cells). A
2-cell~$\alpha$ whose source and target is of the form above is called
\emph{regular} (the regular 2-cells are thus those which are in the image of the
embedding). The argument can easily be generalized to any category~$\C$, not
necessarily generated by a 2\nbd{}polygraph (but we will only make use of the
case proved in previous lemma):

\begin{proposition}
  \label{prop:compact-embedding}
  The components $\eta_\C:\C\to\compact{\C}$ of the unit of the adjunction
  between 2-categories and compact 2-categories are full and faithful.
\end{proposition}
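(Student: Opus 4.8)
The plan is to adapt the proof of the preceding Lemma, which is almost uniform in~$\C$. The explicit description of $\compact{\C}$ recalled above holds for an arbitrary 2-category~$\C$: its 2-cells are the formal vertical and horizontal composites of the basic cells $\alpha^0$ (for $\alpha$ a 2-cell of~$\C$), $\eta_{f^n}$ and $\varepsilon_{f^n}$, taken modulo the axioms of 2-categories, the equations making $(-)^0$ compatible with the compositions and identities of~$\C$, and the zig-zag laws. The key observation is that the 2-cells of~$\C$ enter this presentation only as \emph{opaque} generators, so no hypothesis on~$\C$ (such as being freely generated by a polygraph) is used; in particular one cannot simply reduce to the Lemma by presenting~$\C$ as a quotient of a free 2-category, since $\compact{-}$ preserves the coequalizer but full faithfulness does not pass to coequalizers, and a direct argument is needed.

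First I would orient the defining equations into a rewriting system on the 2-cells of~$\compact{\C}$. The zig-zag laws become the rules $l_{f^n}$ and $r_{f^n}$ of the 3-polygraph~$F$ described in the remark above, which delete a matching pair $\eta_{f^n}$, $\varepsilon_{f^n}$. The compatibility equations $\beta^0\circ\alpha^0=(\beta\circ\alpha)^0$ and $\alpha^0\otimes\beta^0=(\alpha\otimes\beta)^0$ are oriented from left to right, merging two $(-)^0$-boxes into the single box carrying the corresponding composite of~$\C$; and $(\id_f)^0$ is rewritten to the identity 2-cell on~$f^0$. The total number of occurrences of basic cells in a representative strictly decreases under every rule, so the system is terminating. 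For local confluence I would inspect the critical pairs: two zig-zag rules overlap exactly as in the remark above and are joinable; two merging rules, or a merging rule and the identity rule, overlap in ways that are joinable using the 2-category axioms together with the associativity, the exchange law and the functoriality of vertical and horizontal composition in~$\C$; and a zig-zag rule and a merging rule have disjoint sets of basic cells in their left members, hence no overlap. By Lemma~\ref{lemma:newman-polygraphs} the system is confluent, so every 2-cell of~$\compact{\C}$ has a unique normal form.

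It then remains to run the infinite-descent argument of the preceding Lemma, which carries over with no change: the 2-cells of~$\compact{\C}$ are inductively generated, and the only basic cells whose source or target carries a 1-cell of nonzero winding number are the $\eta_{f^n}$ and the $\varepsilon_{f^n}$, so a normal 2-cell whose source and target are lists of winding-number-$0$ 1-cells cannot contain any $\eta_{f^k}$ or $\varepsilon_{f^k}$ --- otherwise, tracing the winding-number-$k$ wire attached to such a basic cell forces the 2-cell to contain an infinite chain of basic cells, which is absurd (that the 2-cells of~$\C$ need not admit normal forms is irrelevant, since we never rewrite inside~$\C$). Consequently a normal 2-cell $u\colon f^0\To g^0$, with $f,g$ parallel 1-cells of~$\C$, is a composite of $(-)^0$-boxes and identity wires; being normal it admits no further merging, so it is either $\id_{f^0}$ or a single box~$\gamma^0$ with $\gamma\neq\id_f$, and in both cases $u=\gamma^0=\eta_\C(\gamma)$ in~$\compact{\C}$ for some $\gamma\colon f\To g$ of~$\C$. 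This gives fullness of~$\eta_\C$ (every 2-cell equals its normal form), and faithfulness follows from uniqueness of normal forms, since for $\gamma\neq\id_f$ the box~$\gamma^0$ is already normal, so $\gamma^0=\delta^0$ in~$\compact{\C}$ forces $\gamma=\delta$. The main obstacle is the critical-pair bookkeeping for the merging rules, which is morally just the coherence of horizontal and vertical composition in~$\C$; the descent step itself is identical to the one in the Lemma.
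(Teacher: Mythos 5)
Your overall strategy --- extend the normal-form argument of the preceding Lemma to an arbitrary 2-category~$\C$ by adding ``merging'' rules for the equations $\beta^0\circ\alpha^0=(\beta\circ\alpha)^0$, $\alpha^0\otimes\beta^0=(\alpha\otimes\beta)^0$ and $(\id_f)^0=\id_{f^0}$ --- is a reasonable attempt to make precise the paper's remark that ``the argument can easily be generalized'', and the descent step ruling out $\eta_{f^k}$ and $\varepsilon_{f^k}$ in normal forms with regular boundary does carry over verbatim. But the confluence claim for the merging rules is false, and this breaks both halves of your conclusion. Take a non-identity 2-cell $\alpha:f\To g$ of~$\C$ and a non-identity 1-cell~$v$ composable with~$f$ on the right. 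The 2-cell $\alpha^0\otimes(\id_v)^0$ is an overlap of your horizontal merging rule with your identity rule: the former rewrites it to the single box $(\alpha\otimes\id_v)^0$, the latter to the whiskering $\alpha^0\otimes\id_{v^0}$. Both results are irreducible --- the whiskering contains only one generator occurrence, so no left member of size two can match it, and the identity 2-cell $\id_{v^0}$ is not itself a box $\beta^0$ --- and they are distinct 2-cells of the free 2-category on your generators (one is a single generator labelled by $\alpha\otimes\id_v$, the other a single generator labelled by~$\alpha$ horizontally composed with an identity 2-cell). So the same element of~$\compact{\C}$ has two distinct normal forms, contrary to your claim that this critical pair is joinable ``using the 2-category axioms'': whiskering by $\id_{v^0}$ is not absorbed by any 2-category axiom, unlike vertical composition with $\id_{f^0}$, which is why the vertical analogue of this overlap is harmless.

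Concretely, this falsifies your classification of normal regular 2-cells as ``either $\id_{f^0}$ or a single box'' (the whiskering $\alpha^0\otimes\id_{v^0}$ is a normal 2-cell between regular 1-cells that is neither), and it voids the faithfulness argument, which rests entirely on uniqueness of normal forms. The underlying problem is that the generating set consisting of one box $\alpha^0$ for \emph{every} 2-cell~$\alpha$ of~$\C$ is redundant: the same 2-cell of~$\compact{\C}$ arises both as one big box and as a whiskered smaller box, and size-decreasing rules can never identify the two. To repair this you would either have to add non-size-decreasing rules collapsing whiskerings into boxes (and then redo termination and confluence from scratch), or do what the paper actually does: prove the statement for $\C=E^*$ free on a 2-polygraph, where one takes a generator $\alpha^0$ only for each 2-\emph{generator} $\alpha\in E_2$, no merging rules are needed, and the zig-zag-only system of the Lemma is genuinely convergent --- this being the only case the paper uses. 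For truly arbitrary~$\C$ the clean route is not rewriting on this redundant presentation but the explicit description of free compact 2-categories cited in Section~\ref{sec:compact-embedding}.
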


\noindent
This means that given two 0-cells~$A$ and~$B$ of~$\C$, the
hom-categories~$\C(A,B)$ and~$\compact\C(A,B)$ are isomorphic in a coherent
way. The 2-category~$\compact\C$ thus provides a ``larger world'' in which we
can embed the 2-category~$\C$ without losing information.

An interesting observation about compact 2-categories is that, in those, the
distinction between the source and the target of a 1-cell is quite
``artificial''. This is formalized by the following proposition.

\begin{proposition}
  \label{prop:compact-iso}
  If~$\C$ is a compact 2-category, the sets
  \[
  \Hom(f\otimes g,h)
  \qcong
  \Hom(g,f^{-1}\otimes h)
  \]
  are naturally isomorphic by the function
  \[
  \alpha
  \qmapsto
  (f^{-1}\otimes\alpha)\circ(\eta_{f}\otimes g)
  \]
  Graphically,
  \[
  \strid{compact_iso_l}
  \quad\mapsto\quad
  \strid{compact_iso_r}
  \]
  Similarly, the sets
  \[
  \Hom(f\otimes g,h)
  \qcong
  \Hom(f,h\otimes g^1)
  \]
  are naturally isomorphic by the function
  \[
  \alpha
  \qmapsto
  (\alpha\otimes g^1)\circ(f\otimes\eta_{g^1})
  \]
  These bijections are called \emph{rotations}.
\end{proposition}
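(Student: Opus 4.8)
The plan is to exhibit an explicit inverse to the proposed map and to check that the two round-trips are identities; everything will reduce to the zig-zag laws of the adjunction $f^{-1}\dashv f$. For the first rotation, write $R(\alpha)=(f^{-1}\otimes\alpha)\circ(\eta_f\otimes g)\qcolon g\To f^{-1}\otimes h$ for the map of the statement, and take as candidate inverse the function
\[
S(\beta)\qeq(\varepsilon_f\otimes h)\circ(f\otimes\beta)\qcolon f\otimes g\To h
\]
for $\beta\qcolon g\To f^{-1}\otimes h$, where $\varepsilon_f\qcolon f\otimes f^{-1}\To\id_A$ is the counit of $f^{-1}\dashv f$. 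Graphically $S$ just bends the protruding $f$-wire back down, which is plainly inverse to the bending-up performed by $R$.

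First I would compute $S(R(\alpha))$. Expanding and using functoriality of $f\otimes(-)$ to distribute horizontal over vertical composition yields $(\varepsilon_f\otimes h)\circ\bigl((f\otimes f^{-1})\otimes\alpha\bigr)\circ\bigl((f\otimes\eta_f)\otimes g\bigr)$. One application of the exchange law slides $\alpha$ past the counit, replacing $(\varepsilon_f\otimes h)\circ\bigl((f\otimes f^{-1})\otimes\alpha\bigr)$ by $(\id_{\id_A}\otimes\alpha)\circ\bigl(\varepsilon_f\otimes(f\otimes g)\bigr)$; the units and counits then collect into $\bigl((\varepsilon_f\otimes f)\circ(f\otimes\eta_f)\bigr)\otimes g$, which equals $\id_f\otimes\id_g=\id_{f\otimes g}$ by the zig-zag law $(\varepsilon_f\otimes f)\circ(f\otimes\eta_f)=\id_f$, so that what remains is $\id_{\id_A}\otimes\alpha=\alpha$ by neutrality of horizontal identities. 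The computation $R(S(\beta))=\beta$ is entirely symmetric and uses the other zig-zag law $(f^{-1}\otimes\varepsilon_f)\circ(\eta_f\otimes f^{-1})=\id_{f^{-1}}$; diagrammatically both computations amount to straightening a zig-zag.

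Next I would record naturality --- say in the 2-cells of the hom-categories $\C(B,C)$ and $\C(A,C)$ in which $g$ and $h$ range: for $\gamma\qcolon g'\To g$ and $\delta\qcolon h\To h'$ one has to check $R\bigl(\delta\circ\alpha\circ(\id_f\otimes\gamma)\bigr)=(\id_{f^{-1}}\otimes\delta)\circ R(\alpha)\circ\gamma$, which is a routine consequence of the functoriality of horizontal composition, with no appeal to the zig-zag identities. Finally, the second rotation, with $g^{1}=g^{+1}$ in place of $f^{-1}$, is proved in the same way after replacing the adjunction $f^{-1}\dashv f$ by $g\dashv g^{+1}$: its unit and counit are $\eta_{g^{1}}\qcolon\id_B\To g\otimes g^{1}$ and $\varepsilon_{g^{1}}\qcolon g^{1}\otimes g\To\id_C$ (which coincide with $\eta_g^{+}$ and $\varepsilon_g^{+}$ by the coherence axioms $(g^{+1})^{-1}=g$, $\eta_{g^{+1}}^{-}=\eta_g^{+}$ and $\varepsilon_{g^{+1}}^{-}=\varepsilon_g^{+}$), and the candidate inverse sends $\beta\qcolon f\To h\otimes g^{1}$ to $(h\otimes\varepsilon_{g^{1}})\circ(\beta\otimes g)$.

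None of these steps is genuinely difficult, so the only delicate point is the bookkeeping: making the source and target 1-cells match up under the associativity and unit laws of horizontal composition, and keeping track of which of the two adjunctions ($f^{-1}\dashv f$, resp.\ $g\dashv g^{+1}$) supplies the unit and counit used in each rotation. The string-diagrammatic pictures accompanying the statement make all of this transparent, which is presumably why it is stated with diagrams rather than with a computation.
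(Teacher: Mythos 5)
Your argument is correct: the paper states Proposition~\ref{prop:compact-iso} without proof, treating it as the standard hom-set bijection attached to the adjunctions $f^{-1}\dashv f$ and $g\dashv g^{+1}$, and your explicit inverse together with the two zig-zag computations (and the naturality check via the exchange law) is exactly the routine verification being left implicit. The only point worth keeping an eye on is the one you already flag, namely that the paper's composition is written in diagrammatic order, so that the counit of $f^{-1}\dashv f$ really is $\varepsilon_f\colon f\otimes f^{-1}\To\id_A$ as you use it.
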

In particular, for any pair of 1-cells $f,g:A\to B$, the set $\Hom(f,g)$ is
isomorphic to $\Hom(\id_B,f^{-1}\otimes g)$ and to~$\Hom(f\otimes
g^{-1},\id_A)$. This shows that the notion of ``input'' and ``output'' of
2-cells is fairly artificial in compact 2-categories. Actually, based on these
ideas, it is possible to reformulate equivalently the notion of compact
2-category in a way such that 2-cells only have one ``border'' instead of having
both a source and a target. We have called the resulting notion a \emph{rotative
  2-category}, details can be found in~\cite{mimram:2-cp}.

\bigskip
Proposition~\ref{prop:compact-embedding} provides us with a full and faithful
embedding~$\C\to\compact\C$ of a 2-category~$\C$ into the compact 2-category
that it freely generates.  The interest of this embedding is that there are
``extra morphisms'' in~$\compact\C$ that can be used to represent ``partial
compositions'' in~$\C$. For example, consider two 2-cells~$\alpha:f\To
f_1\otimes g\otimes f_2$ and $\beta:h_1\otimes g\otimes h_2\To h$ in~$\C$. These
can be seen as the morphisms of~$\compact\C$ depicted on the left
of~\eqref{eq:partial-comp} by the previous embedding.

\begin{equation}
  \label{eq:partial-comp}
  \strid{pcomp_a}
  \qquad\qquad
  \strid{pcomp_b}
  \qquad\qquad
  \strid{pcomp_rot}
  \qquad\qquad
  \strid{pcomp}
\end{equation}
From these two morphisms, the morphism \hbox{$\alpha\otimes_g\beta:f^0\To
  f_1^0\otimes h_1^{-1}\otimes h^0\otimes h_2^1\otimes f_2^0$}, depicted in the
center right of~\eqref{eq:partial-comp}, can be constructed. This morphism
represents the \emph{partial composition} of the 2-cells~$\alpha$ and~$\beta$ on
the 1-cell~$g$: up to rotations, this 2-cell is fundamentally a way to give a
precise meaning to the diagram depicted on the right of~\eqref{eq:partial-comp}.

\subsection{Compact polygraphs}
\label{sec:compact-pol}
The notion of \emph{compact 2-polygraph}~$E$ is defined as in~\eqref{eq:rs-3},
where~$\fpoly{E_1}$
is now the set of
morphisms of the free category with formal adjoints on the
graph~\eqref{eq:rs-1}, and $E_2^*$ is the set of 2-cells of the free compact
2-category with the previously generated category as underlying category with
formal adjoints. We write~$\ncPol{2}$ for the category of compact
$2$-polygraphs.

Similarly, the construction of the multicategory of contexts given in
Section~\ref{sec:contexts} can be straightforwardly adapted to compact
polygraphs. Given a compact 2-polygraph~$E$, we still write~$\contexts E$ for
the multicategory thus generated, whose operations are then called \emph{compact
  contexts}. The setting of compact contexts provides a generalization of
partial composition by allowing a ``partial composition of a morphism with
itself''. Namely, from a context
\hbox{$\alpha:(\ldots,(f_i,g_i),\ldots)\to(f,g^1\otimes h\otimes g^0)$} with
\hbox{$f:A\to A$} and~$h:B\to B$ one can build the context
\[
\varepsilon_g^0\circ(g^1\otimes X\otimes g^0)\circ\alpha
\qcolon
(\ldots,(f_i,g_i),\ldots,(h,\id_B))
\qto
(f,\id_A)
\]
where \hbox{$X:h\to\id_B$} is a fresh variable. Graphically,
\[
\strid{merge}
\]
This operation amounts to \emph{merging} the outputs of type~$g^1$ and~$g^0$
of~$\alpha$, thus creating a ``hole'' which is formally taken in account as a
metavariable in the multicategory of compact contexts.

\subsection{Towards a unification algorithm for 3-polygraphs}
\label{sec:algo}
Using the partial composition and merging operations defined above, it is
possible to formulate an algorithm for computing critical pairs in a
3-polygraph, which will proceed as follows. Suppose given a
3-polygraph~$E$. This polygraph generates a 3-category~$E^*$ whose underlying
2-category is written~$\C$ (the 2-category $\C$ is freely generated by the
underlying 2\nbd{}polygraph~$\polytrunc{E}{2}$ of~$E$). The 2-category~$\C$ can
be fully and faithfully embedded into the free compact 2-category~$\compact\C$
it generates (Proposition~\ref{prop:compact-embedding}).
In turn, this compact 2-category generates a multicategory of
contexts~$\contexts{\compact\C}$ in which it can be embedded (any 2-cell
of~$\compact\C$ can be seen as a nullary context
in~$\contexts{\compact\C}$). Given two rewriting rules
\[
r':\alpha'\TO\beta':f'\To g'
\qqtand
r'':\alpha''\TO\beta'':f''\To g''
\]
the two 2-cells~$\alpha':f'\To g'$ and~$\alpha'':f''\To g''$ in~$\C$ can be seen
as 2-cells~$\alpha'^0:f'^0\To g'^0$ and~$\alpha''^0:f''^0\To g''^0$
in~$\compact\C$, which in turn can be seen as nullary contexts
in~$\contexts{\compact\C}(;f'^0\To g'^0)$ and~$\contexts{\compact\C}(;f''^0\To
g''^0)$ respectively. Our algorithm will compute a unifier of those in the
category of compact contexts, consisting of a pair
\[
K'
\qcolon
f'^0\To g'^0,f_1\To g_1,\ldots,f_n\To g_n
\qto
f\To g
\]
and
\[
K''
\qcolon
f''^0\To g''^0,f_1\To g_1,\ldots,f_n\To g_n
\qto
f\To g
\]
of compact contexts such that
\begin{equation}
  \label{eq:compact-unifier}
  K'(\alpha'^0,\id_{f_1\To g_1},\ldots,\id_{f_n\To g_n})
  \qeq
  K''(\alpha''^0,\id_{f_1\To g_1},\ldots,\id_{f_n\To g_n})
\end{equation}
which is minimal and non-trivial (in a sense similar to
Definition~\ref{def:unifier}), up to the symmetry of the multicategory and
rotations. Notice that the usual critical pairs in 3-polygraphs having
2-generators of arity~$1$ (\eg the polygraph of monoids given in
Example~\ref{ex:simpl-2-pres}) are recovered as the particular case where the
unifiers are such that~$n=0$ (there is no hole in the unifier) and the
morphism~\eqref{eq:compact-unifier} is regular, \ie the
morphism~\eqref{eq:compact-unifier} is in the image of the embedding of~$\C$
into~$\contexts{\compact\C}$. When it is not the case, all the unifiers
of~$\alpha'$ and~$\alpha''$ (in the sense of Definition~\ref{def:unifier}) can
be recovered as the contexts of the form
\[
K\circ K'(\id_{f'\To g'},\alpha_1,\ldots,\alpha_n)
\qqtand
K\circ K'(\id_{f''\To g''},\alpha_1,\ldots,\alpha_n)
\]
where the~$\alpha_i$ are morphisms in~$\compact\C$ seen as nullary compact
contexts and $K$ is a unary regular compact context. In this sense, the critical
pairs in the category of compact contexts generate all the unifiers in the usual
sense, and have the advantage of always being finite in number for a finite
polygraph.


A concrete description of the algorithm is out of the scope of this paper since
it requires the elaboration of an explicit representation of the morphisms in the
2-category~$E^*$ generated by a 2-polygraph~$E$~\cite{mimram:critical-pairs,
  mimram:2-cp}: up to now we have defined these morphisms using an abstract
universal construction (see Section~\ref{sec:rs-free-2-cat}) however a more
concrete representation is needed in order manipulate them algorithmically. Such
a representation was developed by the author, by describing the 2-cells in~$E^*$
themselves as polygraphs labeled by~$E$ (\ie objects in the slice category
$\slicecat{\nPol{2}}{E}$) up to isomorphism. For example, if~$E$ is the
2-polygraph corresponding to the signature of monoids defined in
Example~\ref{ex:simpl-2-pres}, the morphism $\mu\circ(\mu\otimes\id_1)$ can be
represented by the polygraph~$M$ such that
\[
M_0=\{1,\ 2,\ 3,\ 4\}
\qquad
M_1=\{5:1\to 2,\ 6:2\to 3,\ 7:3\to 4,\ 8:1\to 3,\ 9:1\to 4\}
\]
and
\[
M_2=\{10:5\otimes 6\To 8,\ 11:8\otimes 7\To 9\}
\]
(we have chosen to number the generators in an arbitrary order) labeled by the
morphism of polygraphs~$\lambda:M\to E$ such that all the 0-, 1- and
2-generators are labeled by~$*$, $1$ and~$\mu$ respectively. Graphically, this
corresponds to give different names (or numbers in this case) to the various
instances of generators used to build the morphism:
\[
\strid{mumu_repr}
\]
Of course, the naming of generators is not canonical which explains why we have
to consider these labeled polygraphs up to isomorphism (this can be seen as
some form of $\alpha$-conversion). The precise description of this 2-category
can be achieved by constructing a structure of monoidal globular
category~\cite{batanin:mon-glob-cat} on the globular category of
polygraphs. This will be presented in detail in further work.

\bigskip

We shall only illustrate how our algorithm works, by giving an example. Consider
a 3\nbd{}polygraph~$E$ with one 0-generator $*$, one 1-generator $1:*\to *$,
three 2-cells $\delta:1\to 4$, $\mu:4\to 1$ and $\sigma:1\to 1$ (where $4$
denotes $1\otimes 1\otimes 1\otimes 1$) and two rewriting rules (3-generators)
whose left members are respectively~$\alpha=\varsigma\circ\delta$ and
$\beta=\mu\circ\varsigma$, where
$\varsigma=\sigma\otimes\sigma\otimes\sigma\otimes\sigma$, \ie string
diagrammatically
\[
\strid{unif_a}
\qquad\qquad
\strid{unif_b}
\]
The algorithm for unifying these two morphisms will go on as follows.
It is non-deterministic and all the unifiers of the two morphisms
(possibly with duplicates) will be obtained as the collection of all the results
of non-failed execution branches the algorithm. It starts by choosing
a 2-generator in each of the morphisms (the gray-colored ones in the diagram
above), and then it will progressively attach new cells to the left one, or link
wires together so that it becomes a most general unifier. If the two chosen
2\nbd{}generators are not the same then the algorithm fails (here it does not
because both are~$\sigma$). Since the~$\sigma$ generator selected on the
morphism on the left is linked with a~$\mu$ generator, the algorithm will start
by adding a~$\mu$ generator to the morphism on the left and doing a partial
composition with it, as shown on the left of~\eqref{ex:unif}. The unification is
then propagated: the~$\mu$ generator in the morphism on the right is connected
with four~$\sigma$ generators. The algorithm will non-deterministically choose
one which has not already been unified and will propagate the unification. For
example, if the third~$\sigma$ from the left is selected, the morphism in the
middle of~\eqref{ex:unif} might be obtained by adding a new~$\sigma$ generator
to the morphism. Non-deterministically, instead of adding a new generator, the
algorithm might choose to merge (using the operation described at the end of
Section~\ref{sec:compact-pol}) two inputs or outputs of the morphisms and the
morphism on the right of~\eqref{ex:unif} might be obtained as well (in this case
a new hole is added to the unifier being constructed).
\begin{equation}
  \label{ex:unif}
  \hspace{-3ex}
  \strid{unif_ex1}
  \qquad
  \strid{unif_ex2}
  \qquad
  \strid{unif_ex3}
\end{equation}
Finally, by fully executing the algorithm, the three morphisms below will be
obtained as unifiers (as well as many others).
\begin{equation*}
  \hspace{-3ex}
  \strid{unif_ex_res1}
  \enspace
  \strid{unif_ex_res2}
  \enspace
  \strid{unif_ex_res3}
\end{equation*}

It can be shown that the algorithm terminates and generates all the critical
pairs in compact contexts, and these are in finite number. It is important to
notice that the algorithm generates the critical pairs of a rewriting system~$R$
in the ``bigger world'' of compact contexts, from which we can generate the
critical pairs in the 2-category generated by~$R$ (which are not necessarily in
finite number as explained in the introduction). If joinability of the critical
pairs in compact contexts implies that the rewriting system is confluent, the
converse is unfortunately not true: a similar situation is well known in the
study of $\lambda$-calculus with explicit substitutions, where a rewriting system
might be confluent without being confluent on terms with metavariables.

We have done a toy implementation of the algorithm in less than 2000 lines of
OCaml, with which we have been able to successfully recover the critical pairs
of rewriting systems in~\cite{lafont:boolean-circuits}. Even though we did not
particularly focus on efficiency, the execution times are good, typically less
than a second, because the morphisms involved in polygraphic rewriting systems
are usually small (but they can generate a large number of critical pairs). We
thus have hope to be able to build efficient tools in order to help dealing with
large algebraic structures.

\section{Future work}
\label{sec:future-work}
We have tried gradually to expose the notion of higher-dimensional rewriting
system and to connect it with the well known and well studied special cases of
string and term rewriting system. We have also introduced the notion of
multicategory of compact contexts generated by a 2-polygraph, which lays the
theoretical foundations for unification in polygraphic 2\nbd{}dimensional
rewriting systems. This leaves many research tracks open for future work, some
of which are detailed below.

\subsection{A 2-dimensional unification algorithm}
We have hinted in Section~\ref{sec:algo} how the theoretical tools introduced in
this article can be used to formulate a unification algorithm for
3-polygraphs. This algorithm will be described and proven correct in detail in
subsequent work~\cite{mimram:2-cp}. In particular, we plan to study the precise
links between our algorithm and the usual unification for term rewriting
systems, as well as algorithms for (planar) graph rewriting.

\subsection{Compact rewriting systems}
The use of compact 2-categories seems to be very promising, since it provides a
bigger world in which unification is simple to handle (there are a finite number
of critical pairs in particular). Moreover, left and right members of rules in
polygraphic rewriting systems are morphisms in 2-categories, but we can extend
the framework to have ``compact rewriting rules'' whose left and right members
are morphisms in compact 2-categories. There is no known finite convergent
polygraphic rewriting system presenting the category~$\Rel$ of finite sets and
relations~\cite{lafont:boolean-circuits} (which corresponds to the theory of
qualitative bicommutative bialgebras~\cite{mimram:first-order-causality}). We
conjecture that such a system does not exist. However, we believe that it would
be possible to have a finite convergent compact polygraphic rewriting system
containing rules such as
\[
\strid{rel_ex_l}
\qqTO
\strid{rel_ex_r}
\]
where~$\gamma$ is the generator for the symmetry,~$\delta$ is the
comultiplication and~$\mu$ is the multiplication. We plan to use our unification
algorithm in order to define and study such a rewriting system. It would also be
interesting to adapt the techniques developed by Guiraud to show termination of
polygraphic systems~\cite{guiraud:termination-3-rewr}.

\subsection{Parametric polygraphs}
In order to describe those free compact 2\nbd categories, we had to modify the
definition of the notion of polygraph by replacing the free category
construction by a free category with formal adjoints construction, and the free
2-category construction by a free compact 2-category construction. This suggests
that it might be interesting to investigate a more modular notion of polygraph,
parametrized by a series of adjunctions, which could be used to generate free
$n$-categories \emph{with properties} (e.g.~compact categories, groupoids,
etc.).

\subsection{Towards higher dimensions}
Since the notion of polygraphic rewriting system can be generalized to any
dimension, we would like to also have a generalization of rewriting theory to
higher dimensions using polygraphic rewriting systems. This would require a more
abstract and general formulation of the unification techniques that are used
here, in order to be able to extend them easily to higher dimensions.


\subsection{Practical uses of this work}
In some sense, our work can be considered as an algebraic study of the notion of
a bunch of operators linked by planar wires. We believe that this point of view
should be taken seriously and we plan to investigate a possible application of
the polygraphic rewriting techniques to electronic circuits. This could provide
a nice theoretical framework in which we could express and study optimization of
integrated circuits. Another field of application should be the design of an
optimizing language for digital signal processing. Sound effects are often
described by diagrammatic notation which is very close to the
string-diagrammatic notation for morphisms in 2-categories generated by
2-polygraphs. For instance an echo operator is often pictured as
\[
\strid{echo}
\]
where~$+$ adds to signals, $\times a$ amplifies the signal by a
coefficient~$a<1$ and~$d$ delays the signal for~$d$ seconds: the echo is
obtained by adding the signal~$d$ seconds before, at a lower volume, to the
current signal. The rewriting techniques offered by polygraphs could therefore
be used in order to optimize those circuits using 3-dimensional rewriting
systems.


\subsection*{Acknowledgments}
The author is much indebted to John Baez, Albert Burroni, Jonas Frey, Emmanuel
Haucourt, Martin Hyland, Yves Lafont, Paul-André Melliès and François Métayer.

\bibliography{papers}

\begin{thebibliography}{10}

\bibitem{baader-nipkow:trat}
F.~Baader and T.~Nipkow.
\newblock {\em {Term Rewriting and All That}}.
\newblock Cambridge University Press, 1999.

\bibitem{baez:rosetta-stone}
J.~Baez and M.~Stay.
\newblock {Physics, Topology, Logic and Computation: A Rosetta Stone}.
\newblock In {\em New Structures for Physics}, volume 813 of {\em Lecture Notes
  in Physics}, pages 95--172. Springer, 2011.

\bibitem{barrett-westbury:spherical-categories}
J.~Barrett and B.~Westbury.
\newblock {Spherical Categories}.
\newblock {\em Advances in Mathematics}, 143(2):357--375, 1999.

\bibitem{batanin:mon-glob-cat}
M.~Batanin.
\newblock {Monoidal Globular Categories As a Natural Environment for the Theory
  of Weak $n$\nbd{}Categories}.
\newblock {\em Advances in Mathematics}, 136(1):39--103, 1998.

\bibitem{burroni:higher-word}
A.~Burroni.
\newblock {Higher-dimensional word problems with applications to equational
  logic}.
\newblock {\em Theoretical Computer Science}, 115(1):43--62, 1993.

\bibitem{fiore-plotkin-turi:as-vb}
M.~Fiore, G.~Plotkin, and D.~Turi.
\newblock {Abstract syntax and variable binding}.
\newblock In {\em Logic in Computer Science, 1999. Proceedings. 14th Symposium
  on}, pages 193--202, 2002.

\bibitem{mellies:ast}
G.~Gonthier, J.-J. L{\'e}vy, and P.-A. Melli{\`e}s.
\newblock An abstract standardisation theorem.
\newblock In {\em Proceedings of the 8th Annual Symposium on Logic in Computer
  Science}, pages 72--81, 1992.

\bibitem{gordon-power-street:coh-tricat}
R.~Gordon, A.~Power, and R.~Street.
\newblock {\em {Coherence for tricategories}}.
\newblock Amer Mathematical Society, 1995.

\bibitem{goubault:logical-relations}
J.~Goubault-Larrecq, S.~Lasota, and D.~Nowak.
\newblock {Logical relations for monadic types}.
\newblock {\em Mathematical Structures in Computer Science}, 18(06):1169--1217,
  2008.

\bibitem{guiraud:these}
Y.~Guiraud.
\newblock {\em Pr{\'e}sentations d'op{\'e}rades et syst{\`e}mes de
  r{\'e}{\'e}criture}.
\newblock PhD thesis, Universit{\'e} Montpellier 2, 2004.

\bibitem{guiraud:termination-3-rewr}
Y.~Guiraud.
\newblock {Termination orders for 3-dimensional rewriting}.
\newblock {\em Journal of Pure and Applied Algebra}, 207(2):341--371, 2006.

\bibitem{guiraud:three-dimensions-proofs}
Y.~Guiraud.
\newblock {The three dimensions of proofs}.
\newblock {\em Annals of pure and applied logic}, 141(1-2):266--295, 2006.

\bibitem{guiraud:presentations-petri-nets}
Y.~Guiraud.
\newblock {Two polygraphic presentations of Petri nets}.
\newblock {\em Theoretical Computer Science}, 360(1-3):124--146, 2006.

\bibitem{guiraud-malbos:higher-cat-fdt}
Y.~Guiraud and P.~Malbos.
\newblock {Higher-dimensional categories with finite derivation type}.
\newblock {\em Theory and Applications of Categories}, 22(18):420--478, 2009.

\bibitem{gurski:tricat}
N.~Gurski.
\newblock {\em {An algebraic theory of tricategories}}.
\newblock PhD thesis, University of Chicago, 2006.

\bibitem{hatcher:algebraic-topology}
A.~Hatcher.
\newblock {\em {Algebraic Topology}}.
\newblock Cambridge University Press, 2002.

\bibitem{joyal-street:geometry-tensor-calculus}
A.~Joyal and R.~Street.
\newblock {T}he {G}eometry of {T}ensor {C}alculus, {I}.
\newblock {\em Advances in Mathematics}, 88:55--113, 1991.

\bibitem{joyal-street:braided-tensor-categories}
A.~Joyal and R.~Street.
\newblock {Braided tensor categories}.
\newblock {\em Advances in Mathematics}, 102(1):20--78, 1993.

\bibitem{joyal-street:planar-diagrams-tensor-algebra}
A.~Joyal and R.~Street.
\newblock {Planar Diagrams and Tensor Algebra}.
\newblock Unpublished manuscript, available on
  \href{http://www.math.mq.edu.au/~street/PlanarDiags.pdf}{Ross Street's
  website}, 1998.

\bibitem{kelly-laplaza:coherence-compact-closed}
G.~Kelly and M.~Laplaza.
\newblock {Coherence for compact closed categories}.
\newblock {\em Journal of Pure and Applied Algebra}, 19:193--213, 1980.

\bibitem{kelly-street:review-two-cat}
G.~Kelly and R.~Street.
\newblock {Review on the Elements of 2-categories}.
\newblock {\em Proceedings Sydney Category Theory Seminar}, 73:75--103, 1972.

\bibitem{lafont:boolean-circuits}
Y.~Lafont.
\newblock {Towards an algebraic theory of Boolean circuits}.
\newblock {\em Journal of Pure and Applied Algebra}, 184(2-3):257--310, 2003.

\bibitem{lafont-proute:homology-monoids}
Y.~Lafont and A.~Prout{\'e}.
\newblock {Church-Rosser Property and Homology of Monoids}.
\newblock {\em Mathematical Structures in Computer Science}, 1(3):297--326,
  1991.

\bibitem{lawvere:phd}
F.~W. Lawvere.
\newblock {\em {Functorial Semantics of Algebraic Theories and Some Algebraic
  Problems in the context of Functorial Semantics of Algebraic Theories}}.
\newblock PhD thesis, Columbia University, 1963.

\bibitem{leinster:higher-operads}
T.~Leinster.
\newblock {\em {H}igher {O}perads, {H}igher {C}ategories}.
\newblock Number 298 in London Mathmatics Society Lecture Note Series.
  Cambridge University Press, 2004.

\bibitem{maclane:cwm}
S.~MacLane.
\newblock {\em Categories for the {W}orking {M}athematician}, volume~5 of {\em
  Graduate Texts in Mathematics}.
\newblock Springer Verlag, 1971.

\bibitem{mellies:cat-sem-ll}
P.-A. Melli{\`e}s.
\newblock {Categorical Semantics of Linear Logic}.
\newblock {\em Panoramas et synth{\`e}ses-Soci{\'e}t{\'e} math{\'e}matique de
  France}, (27):1--196, 2009.

\bibitem{mimram:phd}
S.~Mimram.
\newblock {\em {S{\'e}mantique des jeux asynchrones et r{\'e}{\'e}criture
  2\nobreakdash-dimensionnelle}}.
\newblock PhD thesis, PPS, CNRS -- Universit{\'e} Paris Diderot, 2008.

\bibitem{mimram:first-order-causality}
S.~Mimram.
\newblock {The Structure of First-Order Causality}.
\newblock In {\em Proceedings of the 24th symposium on Logic in Computer
  Science}, pages 212--221, 2009.

\bibitem{mimram:critical-pairs}
S.~Mimram.
\newblock Computing critical pairs in 2-dimensional rewriting systems.
\newblock In C.~Lynch, editor, {\em Proceedings of the 21st International
  Conference on Rewriting Techniques and Applications}, volume~6 of {\em
  Leibniz International Proceedings in Informatics (LIPIcs)}, pages 227--242,
  Dagstuhl, Germany, 2010. Schloss Dagstuhl--Leibniz-Zentrum fuer Informatik.

\bibitem{mimram:2-cp}
S.~Mimram.
\newblock {Representing 2-Dimensional Critical Pairs}.
\newblock Available at
  \url{http://www.pps.univ-paris-diderot.fr/~smimram/docs/mimram_repr2d.pdf},
  2010.

\bibitem{power:n-cat-pasting}
J.~Power.
\newblock {An $n$-categorical pasting theorem}.
\newblock In {\em Category Theory}, volume 1488 of {\em Lecture Notes in
  Mathematics}, pages 326--358. Springer, 1991.

\bibitem{preller-lambek:free-compact}
A.~Preller and J.~Lambek.
\newblock {Free compact 2-categories}.
\newblock {\em Mathematical Structures in Computer Science}, 17(02):309--340,
  2007.

\bibitem{selinger:graphical-survey}
P.~Selinger.
\newblock {A Survey of Graphical Languages for Monoidal Categories}.
\newblock In {\em New Structures for Physics}, volume 813 of {\em Lecture Notes
  in Physics}, pages 289--355. Springer, 2011.

\bibitem{squier:word-problems}
C.~Squier.
\newblock {Word problems and a homological finiteness condition for monoids}.
\newblock {\em Journal of Pure and Applied Algebra}, 49(1-2):201--217, 1987.

\bibitem{street:limit-indexed-by-functors}
R.~Street.
\newblock {Limits indexed by category-valued 2-functors}.
\newblock {\em Journal of Pure and Applied Algebra}, 8(2):149--181, 1976.

\bibitem{terese:trs}
Terese.
\newblock {\em {Term Rewriting Systems}}, volume~55 of {\em Cambridge Tracts in
  Theoretical Computer Science}.
\newblock Cambridge University Press, 2003.

\end{thebibliography}
\bibliographystyle{abbrv}
\end{document}